\newtheorem{remark}{Remark}
\newtheorem{theorem}{Theorem}
\newtheorem{lemma}[theorem]{Lemma}
\DeclareMathAlphabet{\mathpzc}{OT1}{pzc}{m}{it}
\DeclareMathOperator*{\argmin}{arg\,min}
\begin{document}

\title{Chance Constrained Policy Optimization for Process Control and Optimization}

 \author{Panagiotis Petsagkourakis, Ilya Orson Sandoval, Eric Bradford, Federico Galvanin, Dongda Zhang and Ehecatl Antonio del Rio-Chanona
 
 \thanks{P. Petsagkourakis and F. Galvanin are with the Centre for Process Systems Engineering (CPSE), University College London, Torrington Place, London, United Kingdom}
 \thanks{ I. O. Sandoval and Ehecatl Antonio del Rio-Chanona are with the Centre for Process Systems Engineering (CPSE), Imperial College London, UK}
 \thanks{ E. Bradford is with the Department of Engineering Cybernetics, Norwegian University of Science and Technology, Trondheim, Norway}
 \thanks{D. Zhang is with the Centre for Process Integration, Department of Chemical Engineering and Analytical Science,The University of Manchester, UK and Centre for Process Systems Engineering (CPSE), Imperial College London, UK}
 
 }

\maketitle

\begin{abstract}                
Chemical process optimization and control are affected by 1) plant-model mismatch, 2) process disturbances, and 3) constraints for safe operation. Reinforcement learning by policy optimization would be a natural way to solve this due to its ability to address stochasticity, plant-model mismatch, and directly account for the effect of future uncertainty and its feedback in a proper closed-loop manner; all without the need of an inner optimization loop. One of the main reasons why reinforcement learning has not been considered for industrial processes (or almost any engineering application) is that it lacks a framework to deal with safety critical constraints.
Present algorithms for policy optimization use difficult-to-tune penalty parameters, fail to reliably satisfy state constraints or present guarantees only in expectation. We propose a chance constrained policy optimization (CCPO) algorithm which guarantees the satisfaction of joint chance constraints with a high probability - which is crucial for safety critical tasks. This is achieved by the introduction of constraint tightening (backoffs), which are computed simultaneously with the feedback policy. Backoffs are adjusted with Bayesian optimization using the empirical cumulative distribution function of the probabilistic constraints, and are therefore self-tuned. This results in a general methodology that can be imbued into present policy optimization algorithms to enable them to satisfy joint chance constraints with high probability. We present case studies that analyze the performance of the proposed approach.

\end{abstract}
\section{Introduction}

The optimization of chemical processes presents distinctive challenges to the stochastic systems community given that they suffer from three conditions:
1) there is no precise known model for most industrial scale processes (plant-model mismatch), leading to inaccurate predictions and convergence to suboptimal solutions, 2) the process is affected by disturbances (i.e. it is stochastic), and 3) state constraints must be satisfied due to operational and safety concerns, therefore constraint violation can be detrimental or even dangerous. In this work we use constrained policy search, a reinforcement learning (RL) technique, to address the above challenges. 

RL is a machine learning technique that computes a policy which learns to perform a task by interacting with the (stochastic) environment.
RL has been shown to be a powerful control approach, and one of the few control techniques able to handle nonlinear stochastic optimal control problems \cite{Bertsekas:2000:DPO:517430, nathan2019}. RL in the approximate dynamic programming (ADP) sense has been studied for chemical process control in \cite{Lee2005}. A model-based strategy and a model-free strategy for control of nonlinear processes were proposed in \cite{Peroni2005}. ADP strategies were used to address fed-batch reactor optimization, in \cite{Lee2006} mixed-integer decision problems were addressed with applications to scheduling. In \cite{Tang2018} RL was combined with distributed optimization techniques, to solve an input-constrained optimal control, among other works (e.g. \cite{Chaffart2018}, \cite{Shah2016}). All these approaches rely on action-value methods, which approximate the solution of the Hamilton–Jacobi–Bellman equation (HJBE), and have been shown to be reliable for some problem instances. RL also shares similar features with multiparametric model predictive control (or explicit MPC) \cite{Bemporad1999, CharitopoulosEMPC}, where an explicit representation of the controller can be found, such that it satisfies the optimality conditions. However, explicit MPC usually requires some approximations or assumptions on the models.

Policy gradient RL methods \cite{Sutton:1999:PGM:3009657.3009806} have been proposed to directly optimize the control policy. Unlike action-value RL methods where convergence to local optima is not guaranteed, policy gradient methods can guarantee convergence to local optimality even in high dimensional continuous state and action spaces. 
However, the inclusion of constraints in policy gradient methods is not straightforward. Existing methods cannot guarantee strict feasibility of the policies even when initialized with feasible initial policies \cite{Wen2018}. The main approaches to incorporate constraints make use of trust-region, fixed penalties~\cite{Achiam2017, Tessler2018}, and cross entropy~\cite{Wen2018}. Furthermore, if online optimization is to be avoided, addressing the constraints by the use of penalties is a natural choice. Various approaches have been proposed in this direction, however, current approaches easily lose optimality or feasibility ~\cite{Achiam2017} and  guarantee feasibility only in expectation. Following this thread of thought, a Lyapunov-based approach is implemented in \cite{Chow2019}, where a Lyapunov function is constructed and the unconstrained policy is projected to a safety layer allowing the satisfaction of constraints in expectation. In \cite{Yang2020Projection-Based} an upper bound on the expected constraint violation is provided for the projected-based policy optimization, where the initial unconstrained policy is projected back to the constraint set. In \cite{Liu2019} an interior-point inspired method that is widely used in control \cite{PetsagkourakisTAC} is proposed, where constraints are incorporated into the reward using a logarithm barrier function, allowing the satisfaction of the constraints in expectation. 

The above methods all guarantee constraint satisfaction in expectation, which is inadequate for safety critical engineering applications (very loosely speaking, this means violating 50\% of the time). Furthermore, although penalties are a natural way to avoid an online optimization loop (one of the main advantages of policy gradients), tuning these penalties is not always straightforward, and usually rely on heuristics which significantly affect the performance of policy gradient algorithms \cite{Engstrom2020Implementation}. This suggests constrained policy gradient methods would benefit from self-tuning parameters.
As mentioned earlier, it is well known that policy gradient methods present many advantages, however, their application domain will remain limited until they can handle constraint satisfaction reliably. This is the main challenge addressed in this work.
Our proposed method - chance constrained policy optimization (CCPO) - guarantees the satisfaction of joint chance constraints for the optimal policy. This allows for the satisfaction of constraints with a high probability, rather than only in expectation. To achieve the satisfaction of joint chance constraints without the need of an online optimization, we use tightening of constraints, which are appended to the objective. We treat the problem of finding the least tightening that satisfies the probabilistic constraints as a black-box optimization problem, and can therefore solve it efficiently by existing methods. Upon convergence the proposed algorithm finds an optimal policy which guarantees the constraint satisfaction to the desired tolerance.

The structure of this paper is as follows. The problem statement is outlined in section 2, the details of the proposed method for probabilistic satisfaction in RL is presented in section 3. A case study is presented in section 4, where the framework is applied to a dynamic bioreactor system, and in the last section, conclusions are outlined.
%
%
\section{Problem Statement}
In this work, the dynamic system is assumed to be given by a probability distribution, following a Markov process,
\begin{equation}\label{MC_unc}
    \textbf{x}_{t+1}\sim p(\textbf{x}_{t+1}|\textbf{x}_t, \textbf{u}_t),\quad\textbf{x}_0 \sim p(\textbf{x}_0),
\end{equation}
where $p(\textbf{x}_i)$ is the probability density function of $\textbf{x}_i$, with $\textbf{x}\in \mathbb{R}^{n_x}$ representing the states, $\textbf{u}\in \mathbb{R}^{n_u}$ the control inputs and $t$ discrete time. This behaviour is observed in systems when stochastic disturbances are present and/or other uncertainties affect the physical system, like parametric uncertainties. A discrete-time system with disturbances and parametric uncertainties can be written as:
\begin{equation}\label{additive_unc}
        \textbf{x}_{t+1} =f(\textbf{x}_{t},\textbf{u}_t, \textbf{p}, \textbf{w}_t)
\end{equation}
where $\textbf{w}\in \mathbb{R}^{n_w}$ is a vector of  disturbances and $\textbf{p}\in \mathbb{R}^{n_p}$ are uncertain parameters. This can be represented by (\ref{MC_unc}).
In this work we seek to maximize an objective function in expectation by using an optimal stochastic policy subject to probabilistic constraints despite the uncertainty of the system. This problem can be written as a stochastic optimal control problem (SOCP):
\begin{equation}
\mathcal{P}(\pi(\cdot)):=\left
\{\begin{aligned}
        &\max_{\pi(\cdot)} \mathbb{E}\left(J(\textbf{x}_{0},\dots,\textbf{x}_{T},\textbf{u}_{0},\dots,\textbf{u}_{T})\right)\\
    &\text{s.t.}\\
    &\textbf{x}_0 \sim p(\textbf{x}_0)\\
    &\textbf{x}_{t+1} \sim p(\textbf{x}_{t+1}|\textbf{x}_t, \textbf{u}_t)\\
    &\textbf{u}_t \sim p(\textbf{u}_t|\textbf{x}_t)=\pi(\textbf{x}_t)\\
    &\textbf{u}_t\in\mathbb{U}\\
    &\mathbb{P}(\bigcap_{i=0}^T \{\textbf{x}_i \in \mathbb{X}_i\})\geq 1-\alpha\\
    &  \forall t \in \left\{0,...,T-1\right\}\label{eq:OCP}
\end{aligned}\right.
\end{equation}
where $J$ is the objective function, $\mathbb{U}$ denotes the set of hard constraints for the control inputs, $\mathbb{X}_i$ represents constraints for states that must be satisfied with a probability $1-\alpha$. Specifically, 
\begin{equation}\label{constraint}
    \mathbb{X}_t = \{\textbf{x}_t\in \mathbb{R}^{n_x}|g_{j, t}(\textbf{x}_t) \leq 0, j =1,\dots,n_g\},
\end{equation} 
with $g_{j,t}$ being the $j^{th}$ constraint to be satisfied at time instant $t$ and the joint chance (also called probabilistic) constraints ($\mathbb{P}(\bigcap_{t=0}^T \{\textbf{x}_t \in \mathbb{X}_i\})\geq 1-\alpha$) are satisfied for the full trajectory over all $t\in\{0,\dots,T\}$.  The probability density function of $\textbf{u}_t$ given state $\textbf{x}_t$ is $p(\textbf{u}_t|\textbf{x}_t)$, and $\pi(\textbf{x}_t)$ is the state feedback policy.
Additionally, $\pi(\cdot)$ is the stochastic policy. Unfortunately, this SOCP is generally intractable and approximations must be sought, hence the use of RL \cite{Bertsekas:2000:DPO:517430, schulman2017proximal, Petsagkourakis2019a}.

In RL a policy $\pi_\theta(\cdot)$ parametrized by the parameters $\theta$ is constructed. This policy maximizes the expectation of the objective function $J(\cdot)$. In the finite horizon discrete-time case, this objective function can be defined as
\begin{equation}
    J = \sum_{t=0}^T\text{\textgamma}^{t}R_t(\textbf{u}_t,\textbf{x}_t),
\end{equation}

where $\text{\textgamma}\in [0,1]$ is the { discount factor} and $R_t$ a given reward at the time instance $t$ for the values of $\textbf{u}_t$, $\textbf{x}_t$.

Notice that we seek for the policy to satisfy joint chance constraints with some high probability. Previous approaches have address the satisfaction of constraints in expectation; this means that constraints are violated (roughly speaking) half of the time.

%
\section{Constrained Policy Optimization for Chance Constraints}
In RL agents take actions to maximize some expected reward given a performance metric. In process control, these agents become the controller, which use a feedback policy $\pi(\cdot)$, to optimize the expected value of an economic metric of the process ($J$). The physical system (or the model) at each sampling time produces a value for the reward $R$ which reflects the performance of the policy. The RL algorithm determines the feedback policy that produces the greatest reward in expectation, which is referred to as policy optimization. However, there is no natural way to handle constraints, and a constraint satisfied in expectation may still have a very high probability of not being satisfied. To satisfy the constraints with some high probability and not only in expectation, we tighten the constraint with backoffs~\cite{Bradford2019, Rafiei2018} $b_{j,t}$ as:
\begin{equation}\label{constraint1}
    \mathbb{\overline{X}}_t = \{\textbf{x}_t\in \mathbb{R}^{n_x}|g_{j, t}(\textbf{x}_t)+b_{j,t} \leq 0, j =1,\dots,n_g\},
\end{equation} 
where variables $b_{j,t} \geq 0$ represent the backoffs which tighten the original feasible set $\mathbb{X}_t$ defined in (\ref{constraint}). Backoffs restrict the perceived feasible space by the controller, and allow guarantees on the satisfaction of chance constraints.

We denote  $\pmb{\tau}$, as the joint random variable of states, controls and rewards for a trajectory with a time horizon $T$:   
\begin{equation}\label{tau}
    \pmb{\tau} = (\textbf{x}_0,\textbf{u}_0,R_0,...,\textbf{x}_{T-1},\textbf{u}_{T-1},R_{T-1},\textbf{x}_T,R_T).
\end{equation}
We also assume that the policy can be parametrized by a finite number of parameters $\theta$, and denote this parametrized policy as:
\begin{equation}\label{policyopt}
    \textbf{u}_t \sim p(\textbf{u}_t|\textbf{x}_t)= \pi_\theta(\textbf{x}_t,D_t),\quad \textbf{u}_t \in \mathbb{U},
\end{equation}
where $\textbf{u}_t \in \mathbb{U}$ is inherently satisfied by the construction of the policy, e.g. the policy passes through a bounded and differential squashing function~\cite{Deisenroth2015}, and $D_t$ is a window of past inputs and states that are used by the policy, for example, if the parametrized policy is a recurrent neural network, then $D_t$ corresponds to a number of past states and controls.

Recently a methodology was proposed that satisfies the expected value of the constraints~\cite{Tessler2018, Wen2018}, however this is not adequate for safety critical constraints in chemical processes. Instead, to account for constraint violations, we incorporate probabilistic constraints. The problem can then be reformulated as:
\begin{equation}\label{policyopt_con}
\begin{split}
    \pi_{\theta^*} =& \arg\max_{\pi_\theta(\cdot)} \mathbb{E}_{\pmb{\tau}}\left(J(\pmb{\tau})\right)\\
    &s.t.\\
    &\textbf{u}_t \sim p(\textbf{u}_t|\textbf{x}_t)= \pi_\theta(\textbf{x}_t,D_t), ~\forall t\in {0,...,T-1} \\
    &\textbf{u}_t \in \mathbb{U} \\
    & \pmb{\tau}\sim p(\pmb{\tau}|\theta) \\
    & \mathbb{P}_{\pmb{\tau}}\left(\bigcap_{i=1}^T \{\textbf{x}_i \in \mathbb{X}_i\}\right)\geq 1-\alpha
    \end{split}
\end{equation}
where $p(\pmb{\tau}|\theta)$ represents the probability of the trajectory $\pmb{\tau}$ given the parametrization $\theta$ of the feedback policy, and $\pi_{\theta^*}$ is the optimal policy.

In order to solve~(\ref{policyopt_con}) we propose to: 
\begin{enumerate}
    \item Parameterize the feedback policy by a multilayer neural network that computes the mean and variance of the control actions, which are consequently sampled as a normal distribution resulting in a stochastic policy.
    \item The probabilistic constraint in (\ref{policyopt_con}) is substituted by a tightened constraint set $\mathbb{\overline{X}}_t$ to guarantee closed-loop probabilistic constraint satisfaction.
    \item The tightened constraints $g_{j, t}(\textbf{x}_t)+b_{j,t} \leq 0, j =1,\dots,n_g$ are incorporated into the objective function and can be incorporated in previous approached \cite{Achiam2017, Chow2018}. This avoids the need to explicitly solve an optimization problem at run-time.
    \item The policy optimization is performed by a policy gradient framework~\cite{Sutton:1999:PGM:3009657.3009806}.
    \item The value of the backoffs should be the smallest value that guarantees constraint satisfaction with a probability of at least $1-\alpha$. This is formulated as a black-box optimization problem, which can be solved efficiently by existing methods \cite{Zhan2020, Jones1998, Cartis2018}.
\end{enumerate}

Notice that the value of the backoffs imply a trade-off: large values guarantee constraint satisfaction, but they make the problem over-conservative and mitigate performance, while smaller values produce solutions with high rewards, but may not guarantee the constraint satisfaction to the desired accuracy ($1-\alpha$). Also notice that if backoffs are too large, the problem might become infeasible.

In the next subsections we introduce the components that were outlined above.


\subsection{Policy parametrization}
For the policy parametrization we use recurrent neural networks, (RNNs)~\cite{Rumelhart1986, RNN2017}. Let $N$ be the length of the window of past states and controls to be used by the parametrized feedback policy, i.e. $D_t = \left[\textbf{x}_{t-1}^T,\textbf{u}^T_{t-1},\textbf{x}_{t-2}^T,\textbf{u}_{t-2}^T,\dots, \textbf{u}_{t-N-1}^T \right]^T$. Then the stochastic policy can be defined as:
\begin{equation}\label{rnn}
     \pi_\theta(\textbf{x}_{t},D_t) = \mathcal{N}(\textbf{u}_t|\pmb{\mu}^\textbf{u}_t,\pmb{\Sigma}^\textbf{u}_t), \quad [\pmb{\mu}^\textbf{u}_t,\pmb{\Sigma}^\textbf{u}_t]=s_\theta(\textbf{x}_{t},D_t)
\end{equation}
where $s_\theta$ is the multilayer RNN parametrized by $\theta$, and $\pmb{\mu}^\textbf{u}_t$ and $\pmb{\Sigma}^\textbf{u}_t$ are the mean and covariance of the normal distribution from which the control $\textbf{u}_t$ is drawn. Deep structures are employed to enhance the performance of the learning process ~\cite{Mnih2013,Mnih2015}.


\subsection{Probabilistic constraints}
In this section, we present a strategy to handle  (joint) chance constraints by policy gradient methods. There is no closed form solution for general probabilistic constraints, and therefore we compute their empirical cumulative distribution function (ECDF) instead. To guarantee with high probability (and not only in expectation) the satisfaction of constraints, we introduce constraint tightening by using backoffs $b_{j,t}$ (see Eq. (\ref{constraint1})). Therefore, we conduct closed-loop Monte Carlo (MC) simulations to compute the backoffs. We pose the problem of attaining the smallest backoffs (least restrictive) that still satisfy our constraints with a probability of at least $1-\alpha$ as an expensive black-box optimization problem, which is solved by Bayesian optimization. In this way we compute a policy that satisfies our constraints with a probability of at least $1-\alpha$, and we can be certain of this with a confidence of at least $1-\epsilon$. The rest of this subsection details the aforementioned methodology.

For convenience we define a single-variate random variable $C(\cdot)$ representing the satisfaction of the joint chance constraint:
\begin{align}\label{eq:chanceconstr_full}
    & C(\mathbf{X}) = \max_{(j,t) \in \{1,\ldots,n_g\} \times \{1,\ldots,T\}} {g_{j,t}(\mathbf{x}_t)}\\
    & F(c) = \mathbb{P}\left(C(\mathbf{X}) \leq c\right) 
\end{align}
then 
\begin{align}\label{eq:chanceconstr}
    & F(c:=0)=  \mathbb{P}\left(C(\mathbf{X}) \leq 0\right) = \mathbb{P} \left(\bigcap^T_{t=0} \{ \mathbf{x}_t \in \mathbb{X}_t \}    \right), 
\end{align}
where $\mathbf{X} = [\mathbf{x}_1,\ldots,\mathbf{x}_T]^{\sf T}$, and $F$ is the cumulative distribution function (CDF).
There is no analytical expression for general probabilistic constraints, and we therefore approximate them by a non-parametric approximation, i.e. their empirical cumulative distribution function (ECDF). We can approximate the value of $F(0)$ by its sample approximation on $S$ Monte Carlo (MC) simulations of $\mathbf{X}$:
\begin{equation}\label{ECDF}
F(0) \approx F_S(0) = \dfrac{1}{S}\sum_{s=1}^S \mathds{1}\left(C(\mathbf{X}^s) \leq 0 \right),
\end{equation}
where $\mathbf{X}^s = [\mathbf{x}^s_1,\ldots,\mathbf{x}^s_T]^{\sf T}$ is the $s^\text{th}$ MC sample of the state trajectory. Notice that ${F}_S(0)$ is a random variable. Define  $\mathds{1}\{C(\mathbf{X}) \leq 0 \}$ as the indicator function for a single trajectory to satisfy all constraints:
%
\begin{equation*}
\mathds{1}\{C(\mathbf{X}) \leq 0\ \} = \begin{cases}1,&  C(\mathbf{X}) \leq 0
\\ 0,& otherwise\end{cases}.
\end{equation*}
Notice that $F_S(0)$ follows a Binomial distribution, as the indicator function is a Bernoulli random variable.
Hence, ${F}_S(0) \sim \dfrac{1}{S}\text{Bin}(S, F(0))$. Later on we will use a realization from this random variable (i.e. $\hat{F}_S$) to approximate the probability for a trajectory to satisfy all constraints. The confidence bound for the ECDF can then be computed from the Binomial CDF via the Clopper-Pearson interval ~\cite{10.2307/2331986, brown2001interval, Bradford2019}. 

Notice that the quality of the approximation in (\ref{ECDF}) strongly depends on the number of samples $S$ used and it is therefore desirable to quantify the uncertainty of the sample approximation itself. This problem has been studied to a great extent in the statistics literature \cite{10.2307/2331986, brown2001interval}, leading to the following lemma. 

\begin{lemma}[\cite{brown2001interval}]\label{lower}
Given the random variable $F_S(0)$ (the ECDF - see (\ref{ECDF}))  based on $S$ i.i.d. samples, the true value of the CDF, $F(0)$, has the following lower bound ${F}_{lb}$ with a confidence level of $1-\epsilon$:
\begin{equation}\label{eq lower bound}
    \begin{split}
        &\mathbb{P}(F(0) \geq F_{lb})\geq 1-\epsilon,\\
        &F_{lb} = 1-\textup{betainv}({\epsilon}, S + 1-S~F_S(0), S~F_S(0)),
    \end{split} 
\end{equation}
with $\textup{betainv}(\cdot,\cdot,\cdot)$ being the inverse of the beta CDF with parameters  $\{S + 1-S~F_S(0)\}$ and $\{S~F_S(0)\}$.
\end{lemma}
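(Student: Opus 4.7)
My plan is to recover the bound as a one-sided Clopper--Pearson confidence limit on a Binomial proportion. Because the summands in \eqref{ECDF} are i.i.d.\ Bernoulli$(F(0))$, writing $X := S\,F_S(0)$ and $p := F(0)$ gives $X \sim \mathrm{Bin}(S, p)$, and the task reduces to constructing a data-dependent $F_{lb}(X)$ with $\mathbb{P}_p\bigl(F_{lb}(X) \leq p\bigr) \geq 1-\epsilon$ for every $p \in [0,1]$.

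The key analytic ingredient is the classical Binomial--Beta duality
\begin{equation*}
    \mathbb{P}_p(X \geq x) \;=\; I_p(x,\, S-x+1) \;=\; \mathbb{P}(Z \leq p), \quad Z \sim \mathrm{Beta}(x,\,S-x+1),
\end{equation*}
which I would derive by repeated integration by parts on the Beta density (equivalently, by identifying $Z$ with the $x$-th order statistic of $S$ independent uniforms). Two consequences drive the rest of the argument: (i) the right-hand side is continuous and strictly monotone in $p$, so the equation $I_p(x, S-x+1) = \epsilon$ has a unique root $F_{lb}(x)$, namely the $\epsilon$-quantile of $\mathrm{Beta}(x, S-x+1)$; and (ii) the reflection identity $I_p(a,b) = 1 - I_{1-p}(b,a)$ rewrites that quantile in the ``$1 - \textup{betainv}$'' form displayed in \eqref{eq lower bound}, producing the shape parameters $S+1-x$ and $x$ stated in the lemma.

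Coverage then follows by the standard discrete pivot argument. Monotonicity yields $F_{lb}(x) \leq p \iff \mathbb{P}_p(X \geq x) \geq \epsilon$, so writing $g_p(k) := \mathbb{P}_p(X \geq k)$ and noting that $g_p$ is decreasing in $k$, the event $\{g_p(X) \geq \epsilon\}$ coincides with $\{X \leq k^\star\}$ for the largest $k^\star$ with $g_p(k^\star) \geq \epsilon$; hence $\mathbb{P}_p(g_p(X) \geq \epsilon) = 1 - g_p(k^\star + 1) \geq 1 - \epsilon$, which is exactly the claimed guarantee. The main obstacles I anticipate are the algebraic bookkeeping in the Beta parameters --- off-by-one errors between $S-x+1$ and $S+1-x$ are easy to make --- and the discreteness of the Binomial, which forces the coverage inequality to be one-sided rather than sharp; the Binomial--Beta identity itself, although standard, is the one genuinely non-routine step in the chain.
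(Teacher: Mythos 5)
The paper offers no proof of this lemma at all --- it is imported by citation from the Clopper--Pearson literature (Brown et al.) --- so there is no in-paper argument to compare yours against. Your route (reduce to a one-sided Binomial confidence limit, invoke the Binomial--Beta duality $\mathbb{P}_p(X\geq x)=I_p(x,S-x+1)$, and close with the discrete pivot/monotonicity argument) is the standard derivation, and your coverage argument is correct as stated.

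There is, however, one genuine problem with your final step: it does not land on the formula printed in the lemma, although you assert that it does. Your own derivation gives the lower limit as the $\epsilon$-quantile $\textup{betainv}(\epsilon,\,x,\,S-x+1)$ with $x=S\,F_S(0)$, and the reflection $I_p(a,b)=1-I_{1-p}(b,a)$ converts this to $1-\textup{betainv}(1-\epsilon,\,S+1-x,\,x)$ --- \emph{not} to the expression $1-\textup{betainv}(\epsilon,\,S+1-x,\,x)$ appearing in \eqref{eq lower bound}. The two differ except at $\epsilon=1/2$: as printed, the lemma's quantity equals $\textup{betainv}(1-\epsilon,\,x,\,S-x+1)$, an upper-tail quantile, and it fails the stated coverage (for $S=x=10$, $\epsilon=0.05$ it gives $F_{lb}\approx 0.995$, whereas the correct bound is $0.05^{1/10}\approx 0.741$; at true $p=0.9$ the event $\{F_{lb}>p\}$ then has probability at least $0.9^{10}\approx 0.35$). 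So the statement contains an $\epsilon\leftrightarrow 1-\epsilon$ swap that your proof should have flagged rather than absorbed; you anticipated off-by-one errors in the shape parameters (which are in fact fine, since $S-x+1=S+1-x$) but missed the mismatch in the quantile level. A second, minor omission: when $x=0$ the relevant Beta shape parameter vanishes and the quantile is undefined, so the bound must be set to $F_{lb}=0$ by convention; one sentence covers it.
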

Lemma \ref{lower} states that it is possible to compute a probabilistic lower bound, with an arbitrarily confidence of at least $1-\epsilon$, for the ECDF of the chance constraints. The following lemma follows directly using the above result.

\begin{lemma}\label{satisfaction}
Given a realization $\hat{F}_{S}$ of the ECDF random variable $F_{S}(0)$ based on $S$ independent samples, we can compute a corresponding realization of the random variable $F_{lb}$ which is denoted $\hat{F}_{lb}$. If the value of $\hat{F}_{lb} \geq 1-\alpha$, then, with a confidence level of $1-\epsilon$, our original chance constraint in \textup{Eq. (\ref{eq:chanceconstr})}  holds.  
\end{lemma}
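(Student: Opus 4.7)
The plan is to chain the Clopper--Pearson lower bound of Lemma~\ref{lower} with the identification in Eq.~(\ref{eq:chanceconstr}), namely $F(0)=\mathbb{P}\!\left(\bigcap_{t=0}^T\{\mathbf{x}_t\in\mathbb{X}_t\}\right)$, so that the original chance constraint is \emph{equivalent} to the scalar inequality $F(0)\geq 1-\alpha$. Once this equivalence is in place, the lemma reduces to transferring the coverage guarantee of a confidence interval procedure into a guarantee on joint constraint satisfaction.

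First I would invoke Lemma~\ref{lower} directly, which supplies the coverage statement $\mathbb{P}(F(0)\geq F_{lb})\geq 1-\epsilon$ for the random lower bound $F_{lb}$ built from $F_S(0)$ via the inverse-beta expression in Eq.~(\ref{eq lower bound}); here the outer probability is over the $S$ i.i.d.\ Monte Carlo trajectories used to form the ECDF. Evaluating the same expression at a specific realization $\hat F_S$ of $F_S(0)$ produces the numerical value $\hat F_{lb}$. The Clopper--Pearson guarantee is then to be read as a statement about the interval \emph{procedure}: when repeated over independent batches of $S$ samples, the random $F_{lb}$ fails to under-estimate $F(0)$ with probability at most $\epsilon$.

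Second, under the lemma's hypothesis $\hat F_{lb}\geq 1-\alpha$, on the (probability $\geq 1-\epsilon$) event $\{F(0)\geq F_{lb}\}$ specialised to the observed realization one obtains the chain $F(0)\geq \hat F_{lb}\geq 1-\alpha$. Combining this with the identification $F(0)=\mathbb{P}\!\left(\bigcap_{t=0}^T\{\mathbf{x}_t\in\mathbb{X}_t\}\right)$ from Eq.~(\ref{eq:chanceconstr}) yields the original chance constraint with confidence $1-\epsilon$, which is the claim.

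The only real subtlety --- rather than a genuine obstacle --- is to cleanly separate two distinct sources of randomness: (i) the Monte Carlo randomness over the $S$ trajectories, which underlies the $1-\epsilon$ confidence level of the interval procedure and is the precise sense in which the final statement holds; and (ii) the closed-loop process randomness of $\mathbf{X}$, with respect to which the inner probability $F(0)$ is defined. Conflating the two would collapse the confidence statement into a plain probability statement on $\mathbf{X}$ and misrepresent the guarantee. Once this bookkeeping is in order, no calculation beyond a direct appeal to Lemma~\ref{lower} is required.
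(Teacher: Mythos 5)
Your proposal is correct and follows essentially the same route as the paper, which derives Lemma~\ref{satisfaction} directly from the Clopper--Pearson coverage guarantee of Lemma~\ref{lower} together with the identification $F(0)=\mathbb{P}\bigl(\bigcap_{t=0}^T\{\mathbf{x}_t\in\mathbb{X}_t\}\bigr)$. Your explicit separation of the Monte Carlo randomness (governing the $1-\epsilon$ confidence) from the closed-loop process randomness (governing $F(0)$) is a welcome clarification of a point the paper leaves implicit, but it is not a different argument.
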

%
%

The implication of Lemma \ref{satisfaction} is that we can  guarantee the satisfaction of joint chance constraints with a user-defined probability of at least $1-\alpha$ and a user-defined confidence of at least $1-\epsilon$.
%
Values of $\hat{F}_{lb}$ that are higher than $1-\alpha$ lead to more conservative solutions, and consequently worse values for the objective function. On the other hand, when $\hat{F}_{lb}$ is lower than $1-\alpha$, then the objective attains better values, but the constraints are not satisfied to the required degree. Therefore, the best trade-off is realized  if $\hat{F}_{lb}$ is as close as possible to $1-\alpha$. 

Hence, we aim to compute tightened constraints employing backoffs $b_{j,t}$ such that $\hat{F}_{lb}-(1-\alpha)$ is close to 0 when the policy optimization has terminated.  Lemma \ref{satisfaction} allows us to find a solution to the original problem having satisfied all joint chance constraints with a confidence at least $1-\epsilon$.

To compute the tightened constraint set as denoted in Eq. (\ref{constraint1}), we first compute an initial set of backoffs ($b_{j,t}^0$), as it has been proposed in~\cite{Paulson2018}, where 
\begin{equation}
    \mathbb{E}\left(g_{j,t}(\textbf{x}_t)\right) + b^0_{j,t} =0~~\text{gives}~ \mathbb{P}\left(g_{j,t}(\textbf{x}_t)\leq 0\right)\geq 1-\delta~~ \forall j,t, 
\end{equation}
with $\delta$ being a tuning parameter for the initial backoffs $b^0_{j,t}$. 
Additionally, the expected value of $g_{j,t}$ is approximated with a sample average approximation (SAA):
\begin{equation}
\bar{g}_{j,t} =\frac{1}{S}\sum_{s=1}^Sg_{j,t}(\mathbf{x}_t^s).
\end{equation}

The initial backoff values are computed to probabilistically satisfy each constraint:
\begin{subequations}
\begin{align}
 &\mathbb{P}\left(g_{j,t}(\textbf{x}_t)\leq 0\right)\geq 1-\delta, \label{prob_in}\\
    &b^0_{j,t} = F_{\mathbf{g}_{j,t}}^{-1}(1-\delta) - \bar{g}_{j,t}(\textbf{x}_t),~\forall j,t
\end{align}
\end{subequations}
with $\mathbf{g}_{j,t}=[g_{j,t}(\mathbf{x}_t^1),\ldots,g_{j,t}(\mathbf{x}_t^S)]^{\sf T}$ and $F_{\mathbf{g}_{j,t}}^{-1}(1-\delta)$ is the $1-\delta$ quantile of $\mathbf{g}_{j,t}$ (\ref{prob_in}). 

We wish the backoffs $\textbf{b}$ (with elements $b_{j,t}~\forall~j,t~\in \{1,\dots,n_g\}\times\{1,\dots,T\}$) to be large enough to ensure the probabilistic satisfaction of constraints. However, if the backoffs are too large it may result in a conservative solution, therefore a worse performance, or even infeasibility of the problem. To obtain the least conservative solution that still guarantees the probabilistic constraint satisfaction specified by $1-\alpha$ we solve a root-finding problem using Lemma~\ref{lower} to find a vector $\pmb{\gamma}$ that parametrizes $b_{j,t} := \gamma_{j}~b^0_{j,t}~\forall j, t$ such that
\begin{equation}\label{solve_backoff}
   \hat{F}_{lb}(\pmb{\gamma})-(1-\alpha)= 0
\end{equation}
Notice, that now $\hat{F}_{lb}$ is a function of $\pmb{\gamma}=[\gamma_{1},...,\gamma_{n_g}]$. In  fact the Eq. \ref{solve_backoff} is an ideal scenario, where the lower bound could be forced to be $1-\alpha$. Here we try to minimize the distance between $\hat{F}_{lb}$  and $1-\alpha$.
With the above procedure we compute a deterministic surrogate for the  constraints that allows us to satisfy the joint chance constraints in the original optimization problem (\ref{policyopt_con}). In the subsequent section we explain how this constraint surrogate is incorporated into the reinforcement learning framework.

\section{CCPO: Chance Constrained Policy Optimization}\label{alg}

\subsection{Policy gradient for fixed backoffs}

In this work we reformulate chance constraints such that their satisfaction is guaranteed with high probability, and they can be incorporated into the objective of the policy gradient method \cite{Sutton:1999:PGM:3009657.3009806}. We therefore avoid the need for a numerical optimization every time the agent/controller outputs an action/control input.

Loosely speaking policy gradient methods aim to update the parametrized policy using the gradient of the reward. Without loss of generality, we outline the implementation for the REINFORCE~\cite{williams1992simple} algorithm for ease of presentation, but this can be generalized to any policy gradient or actor-critic method \cite{schulman2017proximal, Kakade2002}. 

Previous works have embedded Lagrangian methods as adaptive penalty coefficients to enforce satisfaction of constraints \cite{Tessler2018}, however, an adaptive scheme such as gradient ascent or its variants on inequality multipliers are difficult to justify in theory (see for example \cite{Nocedal2006} chapter 12 or \cite{PrimalDualWright} chapter 5), and in practice tend to have numerical issue when an arbitrary number of constraints are enforced and different constraints are active in different instances \cite{Nocedal2006}. 

In this work, we instead propose a $p$-norm :
\begin{equation} \label{eq:modified_obj}
\hat{J}(\pmb{\tau}, \textbf{b}) = J(\pmb{\tau}) - \kappa~\sum_{t=1}^T{||\left[\textbf{g}_{t} (\textbf{x}_t) + \textbf{b}_{t}\right]^{-}}||_p^{p}, 
\end{equation}
where $\left[{g}_{j,t} (\textbf{x}_t) + b_{j,t}\right]^{-} =\max\{{g}_{j,t}, (\textbf{x}_t) + b_{j,t}~, 0\}$, $||\cdot||_p$ is a $p$-norm of the vector $\textbf{g}_t = \left[g_{1,t},\dots, g_{n_g,t}\right]$. We advocate for the use of $l_1$~($p=1$) and $l_2$~($p=2$) norms due to the arbitrary number of constraints that may be considered and their numerical stability~\cite{Nocedal2006}.

%
The following theorem can be stated for the satisfaction of the constraints given $S$ Monte Carlo trajectories. 
\begin{theorem}\label{thm_feasibility}
Consider the chance constrained stochastic optimal problem~(\ref{policyopt_con}) and let  $\pi_{\theta^*}$ be the trained policy that maximizes the expected reward (see \cite{Achiam2017,Tessler2018} or (\ref{eq:modified_obj})) and satisfies (\ref{solve_backoff}) using backoffs $\textbf{b}$. Then the joint chance constraints (\ref{eq:OCP}) will be satisfied with a confidence level of $1-\epsilon$. 
\end{theorem}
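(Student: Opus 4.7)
The plan is to derive the theorem as a direct consequence of Lemma~\ref{satisfaction} combined with the monotonicity of probability induced by the backoff construction. First, I would observe that because $b_{j,t}\geq 0$ for all $(j,t)$, the tightened feasible set $\mathbb{\overline{X}}_t$ defined in~(\ref{constraint1}) satisfies $\mathbb{\overline{X}}_t \subseteq \mathbb{X}_t$, and hence $\bigcap_{t=0}^{T} \{\mathbf{x}_t \in \mathbb{\overline{X}}_t\} \subseteq \bigcap_{t=0}^{T} \{\mathbf{x}_t \in \mathbb{X}_t\}$. By monotonicity of measure, any probabilistic guarantee on the tightened joint chance constraint transfers \emph{a fortiori} to the original one.

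Next, for the trained policy $\pi_{\theta^*}$, condition~(\ref{solve_backoff}) enforces $\hat{F}_{lb}(\pmb{\gamma})=1-\alpha$, where $\hat{F}_{lb}$ is the realization of the Clopper--Pearson lower bound associated with the $S$ independent Monte Carlo closed-loop rollouts of $\pi_{\theta^*}$ applied to the max-aggregated tightened constraint $C$ (i.e., with $g_{j,t}$ replaced by $g_{j,t}+b_{j,t}$ inside the definition~(\ref{eq:chanceconstr_full})). Lemma~\ref{satisfaction} then yields that, with confidence at least $1-\epsilon$, the true probability of satisfying \emph{all} tightened constraints along the entire trajectory is at least $1-\alpha$.

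Chaining the two steps, with confidence at least $1-\epsilon$,
\begin{equation*}
\mathbb{P}\!\left(\bigcap_{t=0}^{T} \{\mathbf{x}_t \in \mathbb{X}_t\}\right) \;\geq\; \mathbb{P}\!\left(\bigcap_{t=0}^{T} \{\mathbf{x}_t \in \mathbb{\overline{X}}_t\}\right) \;\geq\; 1-\alpha,
\end{equation*}
which is exactly the joint chance constraint in~(\ref{eq:OCP}).

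The argument is almost entirely bookkeeping rather than analysis, so the ``hard part'' is really the careful hygiene of what is being conditioned on. I would emphasize two subtleties. (i) The ECDF invoked by~(\ref{solve_backoff}) must be computed from closed-loop rollouts of the \emph{final} trained policy $\pi_{\theta^*}$ rather than an earlier iterate, so that Lemma~\ref{satisfaction}'s guarantee applies to the deployed controller and the $S$ samples are i.i.d. from the correct trajectory distribution $p(\pmb{\tau}\mid\theta^*)$. (ii) Treating~(\ref{solve_backoff}) as an equality is idealized: in practice the black-box optimizer returns backoffs for which $\hat{F}_{lb}(\pmb{\gamma})\geq 1-\alpha$, but the chain of inequalities above is unchanged, so the guarantee still holds. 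Finally, the specific form of the penalty~(\ref{eq:modified_obj}) used during training plays no role in the probabilistic statement: it only shapes which $\theta^*$ is found, while the guarantee depends exclusively on the sampling distribution induced by that $\theta^*$.
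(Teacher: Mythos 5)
Your proof is correct and rests on the same key ingredient as the paper's: Lemma~\ref{satisfaction} (the Clopper--Pearson lower bound on the ECDF from $S$ i.i.d.\ closed-loop rollouts of the trained policy) is what converts the Monte-Carlo estimate into a statement holding with confidence $1-\epsilon$. The one place you diverge is in what you assume $\hat{F}_{lb}(\pmb{\gamma})$ in (\ref{solve_backoff}) is measuring. You take it to be the ECDF lower bound of the \emph{tightened} constraints $g_{j,t}+b_{j,t}\leq 0$, and then bridge back to the original constraints via $\mathbb{\overline{X}}_t\subseteq\mathbb{X}_t$ and monotonicity of measure. In the paper, $C(\mathbf{X})$ in (\ref{eq:chanceconstr_full}) is defined with the \emph{original} $g_{j,t}$, so the verification Monte Carlo is run directly on the untightened constraints (\ref{constraint}) under the backoff-trained policy; the backoffs enter only through the training objective, i.e.\ they shape $\theta^*$ but not the quantity being certified. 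Consequently the paper's proof is a one-step application of Lemma~\ref{satisfaction} with no inclusion argument, and this choice is also what makes the root-finding target $\hat{F}_{lb}=1-\alpha$ the \emph{least conservative} tuning: certifying the tightened constraints, as in your version, is sound but systematically over-conservative, since the true satisfaction probability of the original constraints would then strictly exceed the certified level. Your two hygiene points --- that the $S$ rollouts must come from the final deployed policy $\pi_{\theta^*}$ so the samples are i.i.d.\ from $p(\pmb{\tau}\mid\theta^*)$, and that (\ref{solve_backoff}) should in practice be read as $\hat{F}_{lb}\geq 1-\alpha$ --- are correct and in fact spelled out more carefully than in the paper's own two-line proof, as is your observation that the penalty form (\ref{eq:modified_obj}) and the value of $\kappa$ play no role in the guarantee itself.
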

\begin{proof}
Assume that a policy $\pi_{\theta^*}$ is evaluated on the chance constraints (\ref{constraint}) and a realization of the ECDF $\hat{F}_S$ is computed which leads to a realization $\hat{F}_{lb}$ of the random variable ${F}_{lb}$. Then, given Lemma 2, if $\hat{F}_{lb}\geq1-\alpha$ policy $\pi_{\theta^*}$ will satisfy (\ref{eq:chanceconstr}) with confidence $1-\epsilon$.

\end{proof}
\begin{remark}
Further a posteriori analysis can be performed in the system following the work of \cite{Campi2018}. Given the number of trajectories $S$ and a confidence level $1-\epsilon_S$, then the probability of constraint violation can be found to be bounded with a confidence $1-\epsilon_S$. More details can be in \cite{Campi2018}
\end{remark}
\begin{remark}
The feasibility of the constraints will guaranteed independently from the selection of $\kappa$ (see Theorem~\ref{thm_feasibility}).
\end{remark}
\begin{remark}
The parameter $\kappa$ can be updated using standard techniques from constrained optimization~\cite{Nocedal2006}.
\end{remark}

We now use the policy gradient theorem \cite{Sutton:1999:PGM:3009657.3009806} to obtain an explicit gradient estimate of the reward with respect to the parameters of our policy:
\begin{align}
\nabla_\theta \mathbb{E}_{\pmb{\tau}} \left(\hat{J}\right) \approx \frac{1}{S} \sum_{s=1}^{S}  \left[ \hat{J}(\pmb{\tau}^{s},\textbf{b}) \nabla_\theta \sum_{t=0}^{T-1} \text{log}\left( \pi_\theta(\textbf{x}_t^s,D^s_t)\right)\right]
\end{align}
where $\pmb{\tau}^{s}=(\textbf{x}_0^s,\textbf{u}_0^s,R_0^s,...,\textbf{x}_{T-1}^s,\textbf{u}_{T-1}^s,R_{T-1}^s,\textbf{x}_T^s,R_T^s)$ denotes the realization of the $s^{th}$ trajectory with $R^s$ being the reward for sample $s$ and
\begin{equation}
D^s_t = \left[(\textbf{x}_{t-1}^s)^T,(\textbf{u}_{t-1}^s)^T,
\dots, (\textbf{u}_{t-N-1}^s)^T \right]^T
\end{equation}
denotes the past states and controls used by the policy for sample $s$. 
The variance of the gradient estimate can be reduced with the aid of an action-independent baseline $\bar{\beta}_S$, which does not introduce a bias \cite{Sutton}. A simple but effective baseline is the expectation of reward under the current policy, approximated by the mean of the sampled paths:
\begin{equation}\label{baseline}
    \bar{\beta}_S = \frac{1}{S} \sum_{s=1}^S\hat{J}(\pmb{\tau}^{s},\textbf{b}),
\end{equation}
which leads to:
\begin{equation}\label{Rew-Baseline}
	\nabla_\theta \mathbb{E}_{\pmb{\tau}} \left(\hat{J}\right) \approx \frac{1}{S} \sum_{s=1}^{S}  \left[ (\hat{J}(\pmb{\tau}^s,\textbf{b})-\bar{\beta}_S) \nabla_\theta \sum_{t=0}^{T-1} \text{log}\left( \pi_\theta(\textbf{x}_t^s,D^s_t)\right)\right].
\end{equation}
Using the above gradient of the expected reward with respect to the parameters the policy can now be iterative adjusted, in a steepest ascent framework or any of its variants (e.g. Adam):
\begin{equation}\label{update theta}
\theta_{k+1} := \theta_{k} +  \frac{\ell_k}{S} \sum_{s=1}^{S}  \left[ (\hat{J}(\pmb{\tau}^s, \textbf{b})-\bar{\beta}_s) \nabla_\theta \sum_{t=0}^{T-1} \text{log}\left( \pi_\theta(\textbf{x}_t^s,D^s_t)\right)\right],
\end{equation}
where $\ell_k$ is the adaptive learning rate. The algorithm that trains the policy network for a fixed backoff value  $\textbf{b}$ is outlined in Algorithm~\ref{alg:PG alg1}.
%
\begin{algorithm}[H]
\caption{Policy gradient for fixed backoff}\label{alg:PG alg1}
\begin{algorithmic}
\STATE {\bfseries Input:} Given optimization problem in Eq. (\ref{policyopt_con}) and modified objective in Eq. (\ref{eq:modified_obj}), initialize policy $\pi_\theta$ with parameters $\theta := \theta_0$, initial learning rate $\ell_0$, learning rate update rule $L(\ell_k)$, number of samples for the gradient approximation $S$, number of epochs $K$, tolerance $tol$, and backoffs $\textbf{b}$.\\
\FOR{$k= 1$ {\bfseries to} $K$} 
\STATE 1. Collect $\pmb{\tau}^s$ and $\hat{J}(\pmb{\tau}^s, \textbf{b})$ for samples $s={1,...,S}$.
\STATE 2. Update policy $\pi_\theta$: $\theta_{k+1} := \theta_{k} + $ 
\STATE $+\frac{\ell_k}{S} \sum_{s=1}^{S}  \left[ (\hat{J}(\pmb{\tau}^s, \textbf{b})-\bar{\beta}_s) \nabla_\theta \sum_{t=0}^{T-1} \text{log}\left( \pi_\theta(\textbf{x}_t^s,D^s_t)\right)\right]$.
\STATE 3. Update learning rate $\ell_{k+1}:=L(\ell_k)$.
\STATE 4. if $|\overline{\hat{J}}_{k+1}-\overline{\hat{J}}_{k}|\leq tol$ then $exit$, where $\overline{\hat{J}}$ refers to the sample average of $\hat{J}(\pmb{\tau}^s, \textbf{b})$.
\ENDFOR
\STATE {\bfseries Output:} Optimal policy $\pi_{\theta^*}$, with $\theta^* := \theta_{k+1}$.
\end{algorithmic}
\end{algorithm}

\subsection{Backoff iterations}

Solving Eq. (\ref{solve_backoff}) is complicated because there is no closed form solution, or even an explicit expression. To solve the root-finding problem in Eq. (\ref{solve_backoff}) efficiently, we  formulate it as a least-squares expensive black box optimization problem~\cite{Jones1998} and solve it via Bayesian optimization with the objective function
\begin{equation}\label{BO_obj}
\mathcal{F}(\pmb{\gamma}):=\left(\hat{F}_{lb}(\pmb{\gamma}) - (1-\alpha)\right)^2.
\end{equation}
Algorithm~\ref{alg:PG alg1} yields an optimal stochastic policy for fixed values of the backoffs. We propose Algorithm~\ref{alg:PG alg2} to iteratively adjust the backoffs to guarantee probabilistic constraint satisfaction.

%
%
%
A description of the steps conducted in Algorithm \ref{alg:PG alg2} is presented here:

\textbf{Step (1)}: The policy is trained with $\textbf{b} = 0$ using Algorithm~\ref{alg:PG alg1}.

\textbf{Step (2)}: The initial estimates of the backoff are computed by Monte-Carlo closed-loop simulations.

\textbf{Step (3)}: 

\textbf{(3i)} An initial set of backoffs parameter values $\pmb{\Gamma}=[\pmb{\gamma}^1, \ldots, \pmb{\gamma}^{N_\Gamma}]$ are used to construct a set of optimal policies $\pi_{\theta^{*}}(\cdot|\pmb{\gamma}^1)$\dots$\pi_{\theta^{*}}(\cdot|\pmb{\gamma}^{N_\Gamma})$, recall that $\pmb{\gamma}$ parametrizes $b_{j,t} := \gamma_{j}~b^0_{j,t}~\forall j, t$.

\textbf{(3ii)} For each optimal policy $\pi_{\theta^{*}}(\cdot|\pmb{\gamma}^i), i \in \{1,...,N_\Gamma\}$, the lower bound on the ECDF, $\hat{{F}}_{lb}(\pmb{\gamma}^i),i \in \{1,...,N_\Gamma\}$, is computed by Monte Carlo (a cross-validation of sorts), along with the squared residual in Eq.~\ref{BO_obj}. This yields a set of backoff parameter values $\pmb{\Gamma}=[\pmb{\gamma}^1, \ldots, \pmb{\gamma}^{N_\Gamma}]$ that correspond to residuals ${\textbf{F}}=[\mathcal{F}(\pmb{\gamma}^1), \ldots,\mathcal{F}(\pmb{\gamma}^{N_\Gamma})]$ for Eq. (\ref{BO_obj}). This is the initial sample set used to map backoff values to residual values via a Gaussian process regression, and subsequently solved via a Bayesian optimization framework to enforce Eq. (\ref{solve_backoff}).

\textbf{Step (4)}: For each $m^{th}$ iteration:

\textbf{(4i)} A Gaussian process which maps the backoff values to the squared residuals in Eq.~(\ref{BO_obj}) is constructed. We compute the objective function value as in Eq. (\ref{BO_obj}) to find $\mathcal{F}(\pmb{\gamma}^{N_\Gamma+m-1}):=\left(\hat{F}_{lb}(\pmb{\gamma}^{N_\Gamma+m-1}) - (1-\alpha)\right)^2$.

\textbf{(4ii)} 
We conduct a Bayesian optimization step, via lower confidence bound minimization~\cite{Frazier2018} of the GP. This allows us to obtain the next value for $\pmb{\gamma}^{N_\Gamma+m}$. 

\textbf{(4iii)} We compute a new policy $\pi(\cdot|\pmb{\gamma}^{N_\Gamma+m})$ using Algorithm \ref{alg:PG alg1}.

\textbf{(4iv)} We compute the new backoff $\textbf{b}$ with the new value of the backoff parameter $\pmb{\gamma}^{N_\Gamma+m}$. We also compute the new value for the residuals (objective) $\mathcal{F}$. Data matrices $\pmb{\Gamma}$ and $\textbf{F}$ are updated. The algorithm goes back to \textbf{(i)} where a new GP is constructed incorporating the new values of $\pmb{\gamma}^{N_\Gamma+m}$ and $\mathcal{F}$ and the algorithm proceeds until some tolerance is achieved (e.g. $\mathcal{F}(\pmb{\gamma}^{N_\Gamma+m}) \leq tol$).

It should be noted that every time a new backoff is computed the policy is re-optimized. This may look inefficient at first glance, however the convergence is achieved fast, as at every iteration, the previous policy is used as an initial guess for the next iteration, making the first iteration the most expensive one. Additionally, because the problem is treated as an expensive black-box optimization problem the number of iterations from this outer loop is in general quite small (in our examples no more than 13 iterations). (see section \ref{subse:policy innit}). By the end of Algorithm \ref{alg:PG alg2}, a probabilistically constrained policy will have been constructed.
\begin{algorithm}
\caption{Backoff-Based Policy Optimization}\label{alg:PG alg2}
\begin{algorithmic}
\STATE{\bfseries Input:} Initialize policy parameter $\theta := \theta_{0}$, initial learning rate $\ell_0$, learning rate update rule $L(\ell_k)$, define initial data matrix $\pmb{\Gamma}:=[\pmb{\gamma}^1, \ldots, \pmb{\gamma}^{N_\Gamma}]$ with $N_\Gamma$ values for $\pmb{\gamma}$, $0<\delta<1$, $0<\alpha< 1$, tolerance $tol_0$,  maximum number of backoff iterations $M$ and $S$ number of Monte Carlo samples to compute $\hat{F}_{lb}$, and number of epochs $K$.\\
    \STATE \textbf{1.} { Perform policy optimization} with $\textbf{b} = \textbf{0}$ using Algorithm~\ref{alg:PG alg1}, to obtain nominal policy $\overline{\pi}_{{\theta}^*}$. 
     \STATE \textbf{2.} Estimate initial backoffs using $S$ samples generated by Monte-Carlo simulations from the state trajectories of the nominal policy:\\ $b^0_{j,t} := F_{\mathbf{g}_{j,t}}^{-1}(1-\delta) - \bar{g}_{j,t}(\textbf{x}_t),~\forall j,t$ 
     \STATE \textbf{3.} {
     (i) Use the initially proposed set of backoff parameter values $\pmb{\Gamma}=[\pmb{\gamma}^1, \ldots, \pmb{\gamma}^{N_\Gamma}]$ ({\it Notice} that the backoffs  are $b_{j,t} := \gamma_{j}~b^0_{j,t}~\forall j, t$)  to obtain optimal policies $\pi_{\theta^{*}}(\cdot|\pmb{\gamma}^1)$\dots$\pi_{\theta^{*}}(\cdot|\pmb{\gamma}^{N_\Gamma})$
     
     (ii) Evaluate the policies $\pi_{\theta^{*}}(\cdot|\pmb{\gamma}^1)$\dots$\pi_{\theta^{*}}(\cdot|\pmb{\gamma}^{N_\Gamma})$ and compute their corresponding lower bounds $[\hat{F}_{lb}(\pmb{\gamma}^1), \ldots,\hat{F}_{lb}(\pmb{\gamma}^{N_\Gamma})]$ by Monte-Carlo using Eq.~\ref{solve_backoff}, and compute their residuals ${\textbf{F}}=[\mathcal{F}(\pmb{\gamma}^1), \ldots,\mathcal{F}(\pmb{\gamma}^{N_\Gamma})]$ using Eq.~\ref{BO_obj}}.  
      \STATE \textbf{4.} Perform Bayesian Optimization: \FOR{$m= 1$ {\bfseries to} $\dots$} 
        \STATE i) Construct a mapping from the backoffs parameter values $\pmb{\Gamma}=[\pmb{\gamma}^1, \ldots, \pmb{\gamma}^{N_\Gamma+m-1}]$ to their residuals         ${\textbf{F}}=[\mathcal{F}(\pmb{\gamma}^1), \ldots,\mathcal{F}(\pmb{\gamma}^{N_\Gamma + m-1})]$
        by using a GP regression.
        \STATE  ii) Perform Bayesian optimization over the GP to minimize $\mathcal{F}(\pmb{\gamma}^{N_{\Gamma}+m}):=\left(\hat{F}_{lb}(\pmb{\gamma}^{N_{\Gamma}+m}) - (1-\alpha)\right)^2$.
        \STATE iii)  { Perform policy optimization} with $b_{j,t} := \gamma_{j}^{N_\Gamma+m}~b^0_{j,t}~\forall j, t$ using Algorithm~\ref{alg:PG alg1}, to obtain  policy ${\pi}_{{\theta}^*}(\cdot|\pmb{\gamma}^{N_\Gamma+m})$. 
        \STATE  iv) Update data matrices $\pmb{\Gamma}:=[\pmb{\Gamma},\pmb{\gamma}^{N_{\Gamma}+m}]$ and ${\textbf{F}}:=[\textbf{F}, \mathcal{F}(\pmb{\gamma}^{N_{\Gamma}+m})]$ by Monte-Carlo using Eq.~\ref{solve_backoff}.  
    \IF {$\mathcal{F}(\pmb{\gamma}^{N_{\Gamma}+m}) \leq tol_0$} 
    \STATE {\it exit}
    \ENDIF
    \ENDFOR
\STATE{\bfseries Output:} policy $\pi^*_\theta:=\pi_{\theta^*}^{N_{\Gamma}+m}$.
%
\end{algorithmic}
\end{algorithm}
%


\subsection{Policy initialization}\label{subse:policy innit}
Reinforcement learning methods (particularly policy gradient) are computationally expensive; mainly because initially the agent (or controller in our case) explores the control action space randomly. In the case of process optimization and control, it is possible to use a preliminary controller, along with supervised learning to hot-start the policy, and significantly speed-up convergence. The initial parameterization for the policy (before {\textbf{ Step (1)}}) is trained in a supervised learning fashion where the states are the inputs and the control actions are the outputs. This has been further discussed in \cite{petsagkourakis2020constrained, Paulson_DNN, LUCIA_dnn}
%
 %
 
\section{Case Studies}
The case studies in this paper focuses on the photo-production of phycocyanin synthesized by cyanobacterium \textit{Arthrospira platensis}. The two case studies are separated regarding the type of uncertainty: The case study 1 considers parametric uncertainty and model is considered to be known. The second case study considers that only data is available, where additive disturbance and measurement noise are present (no knowledge of the system's equations). Additionally, in the two case studies different penalizations of the constraints are implemented: Case study 1 uses Eq.(\ref{eq:modified_obj}) with arbitrary $\kappa=0.1$ and $p=1$, and Case study 2 uses Eq.(\ref{eq:modified_obj}) with arbitrary $\kappa =1.$ and $p=2$.

Phycocyanin is a high-value bioproduct and its biological function is to enhance the photosynthetic
efficiency of cyanobacteria and red algae. It has applications as a natural colorant to replace other toxic synthetic pigments in both food and cosmetic production. Additionally, the pharmaceutical industry considers it beneficial because of its unique antioxidant, neuroprotective, and anti-inflammatory properties.

The  dynamic system consists of the following system of ODEs describing the evolution of the concentration ($c$) of biomass ($x$), nitrate ($N$), and product ($q$). The dynamic model is based on Monod kinetics, which describes microorganism growth in nutrient sufficient cultures, where intracellular nutrient concentration is kept constant because of the rapid replenishment. We assume a fixed volume fed-batch. The manipulated variables as in the previous examples are the light intensity ($u_1=I$) and inflow rate ($u_2=F_N$). The mass balance equations are
\begin{align}
     &\dfrac{dc_x}{dt} =  u_m \dfrac{I}{I + k_s + I^2/k_i} \dfrac{c_x c_N}{c_N + K_N} - u_dc_X\label{biomas}\\
     &\dfrac{dc_N}{dt} = -Y_{N/X}  \dfrac{u_m I}{I + k_s + I^2/k_i} \dfrac{c_x c_N}{c_N + K_N} + F_N \label{Nitrogen}\\
     &\dfrac{dc_q}{dt} =
       \dfrac{k_m~I}{I + k_{sq} + I^2/k_{iq}} {c_x} - \dfrac{k_d c_{q}}{C_N + K_{N_q}}       \label{product}
\end{align}
The parameter values are adopted from~\cite{Bradford2019}. 


\subsection{Case Study 1}
Uncertainty is assumed for the initial concentration, where
$\begin{bmatrix}
c_x(0) & c_N(0)
\end{bmatrix} \sim \mathcal{N}(    \begin{bmatrix}
1. & 150.
\end{bmatrix}$, $\text{diag}(1\times 10^{-3}, 22.5))$ and $c_q(0) = 0$.
Additionally, 10\% of parametric uncertainty for  the system is  assumed:
$        \dfrac{k_s}{(\mu mol/m^{2}/s)} \sim \mathcal{N}(178.9, 17.89),
        \dfrac{k_i}{(mg/L)} \sim \mathcal{N}(447.1, 44.71),
        \dfrac{k_N}{(\mu mol/m^{2}/s)} \sim \mathcal{N}( 393.1, 39.31)$. This type of uncertainty is common in engineering settings, as the parameters are obtained using experimental data, and they are subject to their respective confidence regions after they are estimated using regression techniques.
The objective function (reward) in this work is to maximize the product's concentration ($c_q$) at the end of the batch. The objective is additionally penalized by the change of the control actions $\textbf{u}(t) = \left[ I, F_N\right]^{T}$. As a result the reward is:
 \begin{equation}\label{reward_bio}
 \begin{split}
     &R_t = -||\Delta \textbf{u}_t||_r, R_T = c_q(T)\\& r = diag(3.125\times10^{-8}, 3.125 \times 10^{-6})\\ 
 \end{split}
 \end{equation}
 where $t\in \{0, T-1\}$ and $\Delta \textbf{u}_t = \textbf{u}_t - \textbf{u}_{t-1}$. 
 The constraints in this work for each time step are $c_N \leq 800$ and $c_q \leq 0.011c_X$. This constraints have been normalized as:
 \begin{equation}
         g_{1,t} =\dfrac{c_N}{800} - 1\leq 0, ~~~
         g_{2,t} = \dfrac{c_q}{0.011c_X} - 1\leq 0
 \end{equation}
and the joint chance constraint is meant to be satisfied with probability $99\%$ ($\alpha = 0.01$) and confidence level is $99\%$ ($\epsilon = 0.01$). The constraints are added as a penalty with $\kappa = 1$ using (\ref{eq:modified_obj}) and $p=1$. The control actions are constrained to be in the interval $0\leq F_N \leq 40$ and $120 \leq I\leq 400$, these constraints are considered to be hard. The control policy RNN is designed to contain 4 hidden layers, each of which comprises 20 neurons with a leaky rectified linear unit (ReLU) as activation function. A unified policy network with diagonal variance is utilized such that the control actions share memory and the previous states are used from the RNN (together the current measured states). The  computational cost for each control action online is insignificant since it only requires the evaluation of the corresponding RNN. First the algorithm computes the policy for the backoffs to be zero ($\textbf{b} =0$), then the backoffs are updated according to the Algorithm~\ref{alg:PG alg2}. The parameters for the trainings are: $M = 200$, $S = 1000$, $K = 200$, $tol=tol_0=10^{-4}$, $N_\gamma = 5$ and the two previous states and controls are used from the policy. Additionally, the Gaussian process has zero mean as prior, squared-exponential (SE) kernel as the covariance function, and the inputs-outputs are normalized using its mean and variance respectively. After the completion of the training the backoffs have been computed to satisfy (\ref{solve_backoff}). In a rather small number of iterations the backoff values managed to force the $F_{lb}$ to 0.99.

Now, the actual closed-loop constraint satisfaction can be depicted in Fig.~\ref{fig:h1}(a) and Fig.~\ref{fig:h2}(a), where the shaded areas are the 98\% and 2\% percentiles. Notice, that even though the figures for both methods look similar the constraint satisfaction is significantly different, this can be observed in Fig.~\ref{fig:h1}(b) and Fig.~\ref{fig:h2}(b), where the region that the violation of constraints occurs has been zoomed in. This result is also quantitatively depicted in Table~\ref{tab:clsatisfaction}. The column labelled `actual' is the probability of constraint violation that corresponds to the fraction of the 1000 Monte-Carlo trajectories that satisfied both of the constraints (see Eq. (\ref{ECDF})). The column labelled `desired' is the goal for the probability for constraint satisfaction. It is clear that when backoffs are not applied, almost 50\% of the constraints are violated. On the other hand when backoffs are applied then all the constraints are satisfied, which is an expected result as the goal was the lower bound of ECDF ($F_{lb}$) to be 0.99, which means that the actual probability is equal or higher. 
\begin{figure}[H]
\includegraphics[width=\linewidth]{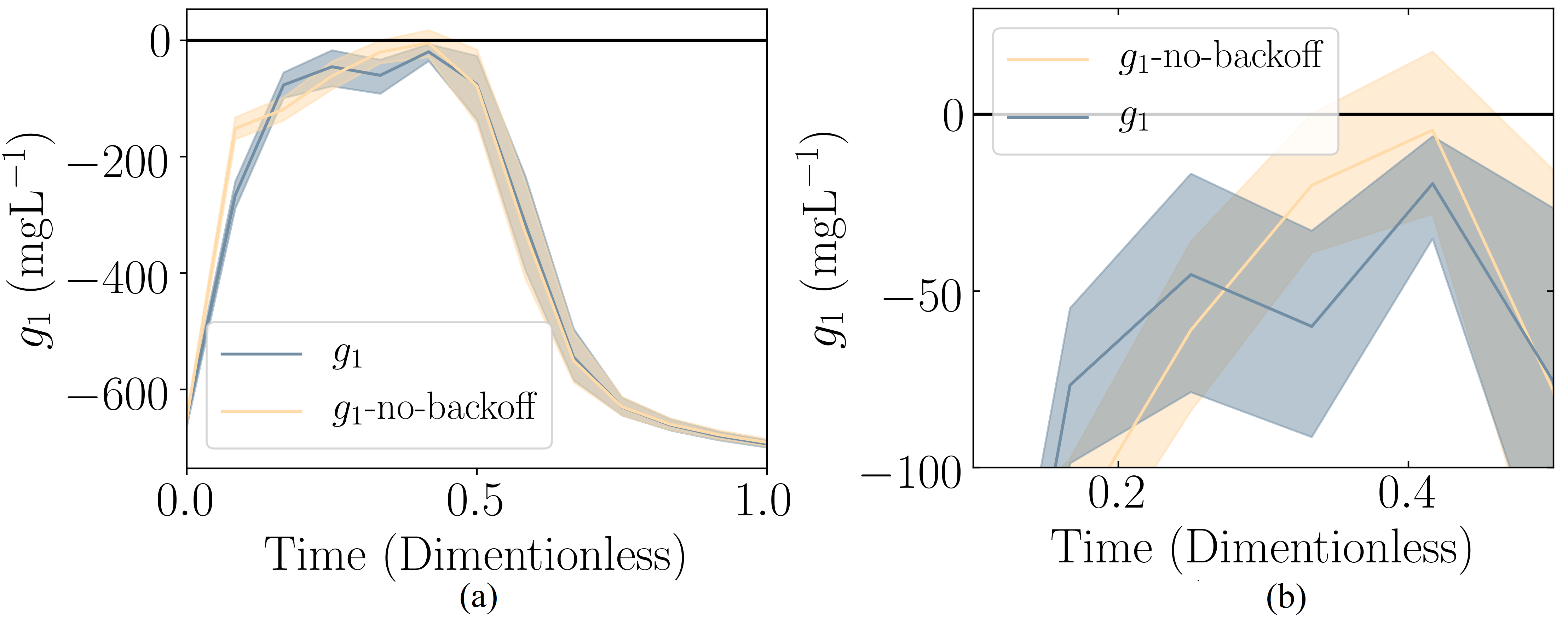}
    \caption{Case study 1: Constraints when backoffs are applied (blue) and when there are set to be zero (yellow) for $g_{1,t}$ (a) and  zoomed in the region that the violation of constraints occur (b). The shaded areas are the 98\% and 2\% percentiles.}
    \label{fig:h1}
\end{figure}
\begin{figure}[H]
\includegraphics[width=\linewidth]{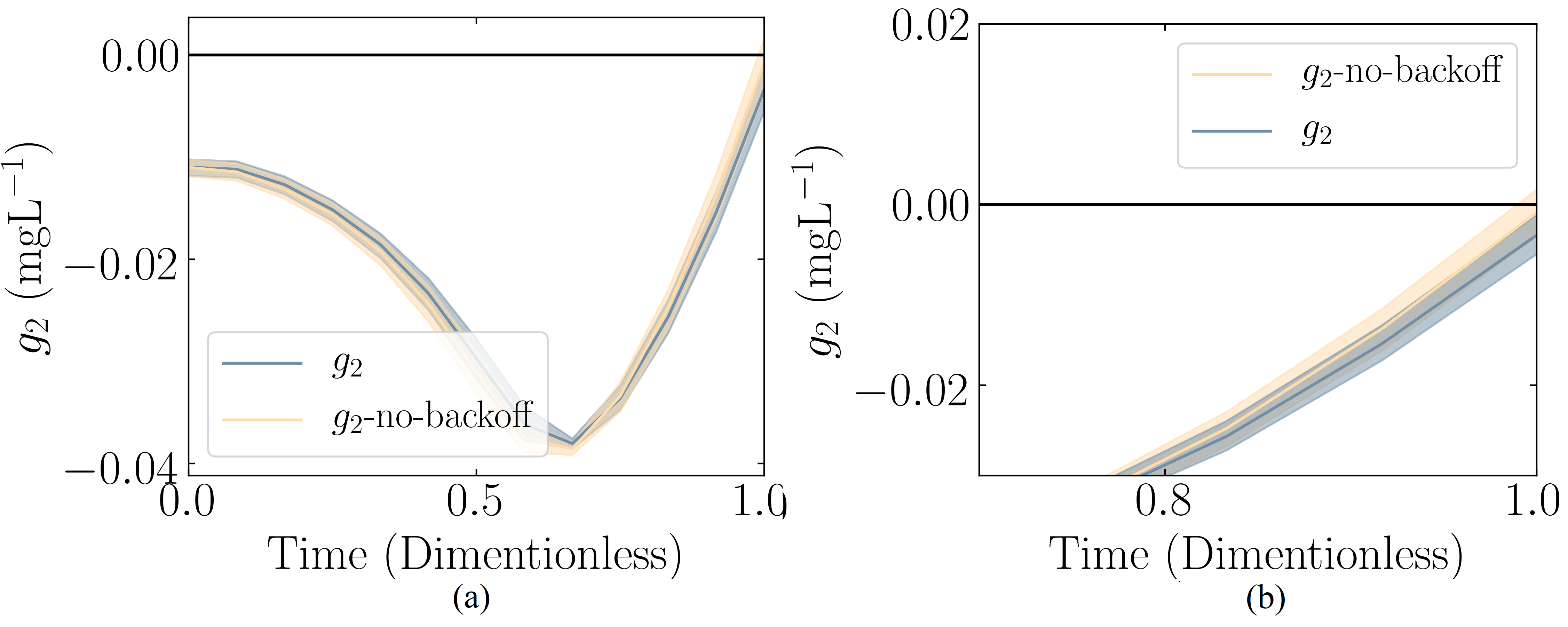}
    \caption{Case study 1: Constraints when backoffs are applied (blue) and when there are set to be zero (yellow) for $g_{2,t}$ (a), zoomed region where the violation of constraints occurs (b). The shaded areas are the 98\% and 2\% percentiles.}
    \label{fig:h2}
\end{figure}
The backoff values for each update are shown in Fig. \ref{fig:backoff1} (a, b), where the red-dashed represents the converged final value. It should be noted that the final value for the product $c_q$~(\ref{reward_bio}) is 0.163 and 0.167 when backoffs are applied and when they are not. This difference is to be expected given that in the nominal case ($\textbf{b} = 0$), the policy allows the violation of constraints which lead to an increase in the concentration of the product. 

\begin{figure}[H]
\centering
\includegraphics[width=\linewidth]{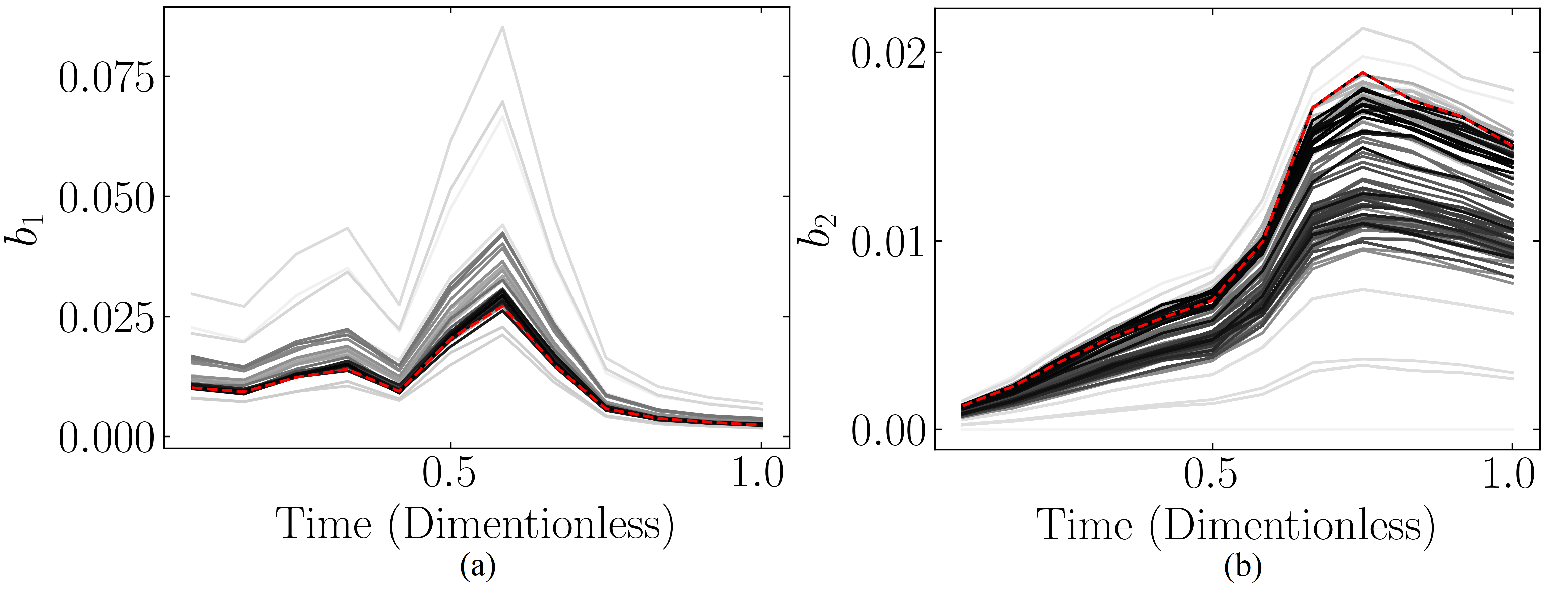}
    \caption{Case study 2: The lines are plotted over number of epochs for $b_{1,t}$ (a) and ,$b_{2,t}$ (b) which are faded out towards earlier iterations.}
    \label{fig:backoff1}
\end{figure}

\subsection{Case Study 2}
The second case study considers the same system but this time no equation is assumed to be available. In this case, a Gaussian process\cite{Rasmussen2006} is used to model the available data. Different kernels can have different effects in the performance, in this case the Mat\'{e}rn32\cite{Rasmussen2006} kernel is employed. The data was generated using the system in Eqs. (\ref{biomas}-\ref{product}), and an additive normally distributed disturbance ($\textbf{w}$) and measurement noise ($\textbf{v}$) with zero mean and $\pmb{\Sigma}_\textbf{w} = \text{diag}(4 \times 10^{-4}, 0.1, 1 \times 10^{-8})$, $\pmb{\Sigma}_\textbf{v} = \text{diag}(4 \times 10^{-5}, 0.01, 1 \times 10^{-9})$. The same uncertainty for the initial conditions is assumed. 

Bayesian frameworks have been used in reinforcement learning~\cite{Chua, Janner} as they can model both the epistemic and aleatoric uncertainty (in Gaussian processes, usually homoscedastic aleatoric uncertainty is considered). One of the fundamental steps here is the propagation of the uncertainty for each step. The nature of Gaussian processes has been exploited in \cite{Umlauft2018, Hewing2019, Bradford2019}, where the generated samples from the GP are conditioned in the posterior (without noise). Then, the GP predicts a mean $\pmb{\mu}_\textbf{x}$  and variance $\pmb{\Sigma}_\textbf{x}$ and the state $\textbf{x}$ is sampled from a normal distribution:  $\textbf{x}\sim\mathcal{N}(\pmb{\mu}_\textbf{x},\pmb{\Sigma}_\textbf{x})$. In this setting 8 episodes were randomly generated using a Sobol sequences~\cite{SOBOL2001271} for the control variables of all 12 time intervals and the initial conditions to train the Gaussian process.

After the training the actual closed-loop constraint satisfaction can be validated. Fig.~\ref{fig:h12}(a) and Fig.~\ref{fig:h22}(a) illustrate the path constraints for each time interval, where the shaded areas are the 98\% and 2\% percentiles. The results are compared with the case in absence of backoffs ($\textbf{b}=0$). It should be noticed that in the absence of backoffs the mean of the constraints' samples barely satisfies the bounds. The difference is clear and more apparent in Fig.~\ref{fig:h22}(a). This can further be noticed in Fig.~\ref{fig:h12}(b) and Fig.~\ref{fig:h22}(b), where focus has been put on the region of constraint violation. Table~\ref{tab:clsatisfaction} shows the probability of constraint violation that corresponds to the fraction of the 1000 Monte-Carlo trajectories that satisfied the constraints (see Eq. (\ref{ECDF})) compared to the `desired' probability for constraint satisfaction. The absence of constraint tightening results in only 24\% of constraint satisfaction compare to our proposed method where 97\% are satisfied. Notice that the desired probability designed in terms of the lower bound of the ECDF ($F_{lb}$) is set to be 0.95.

\begin{figure}[H]
\includegraphics[width=\linewidth]{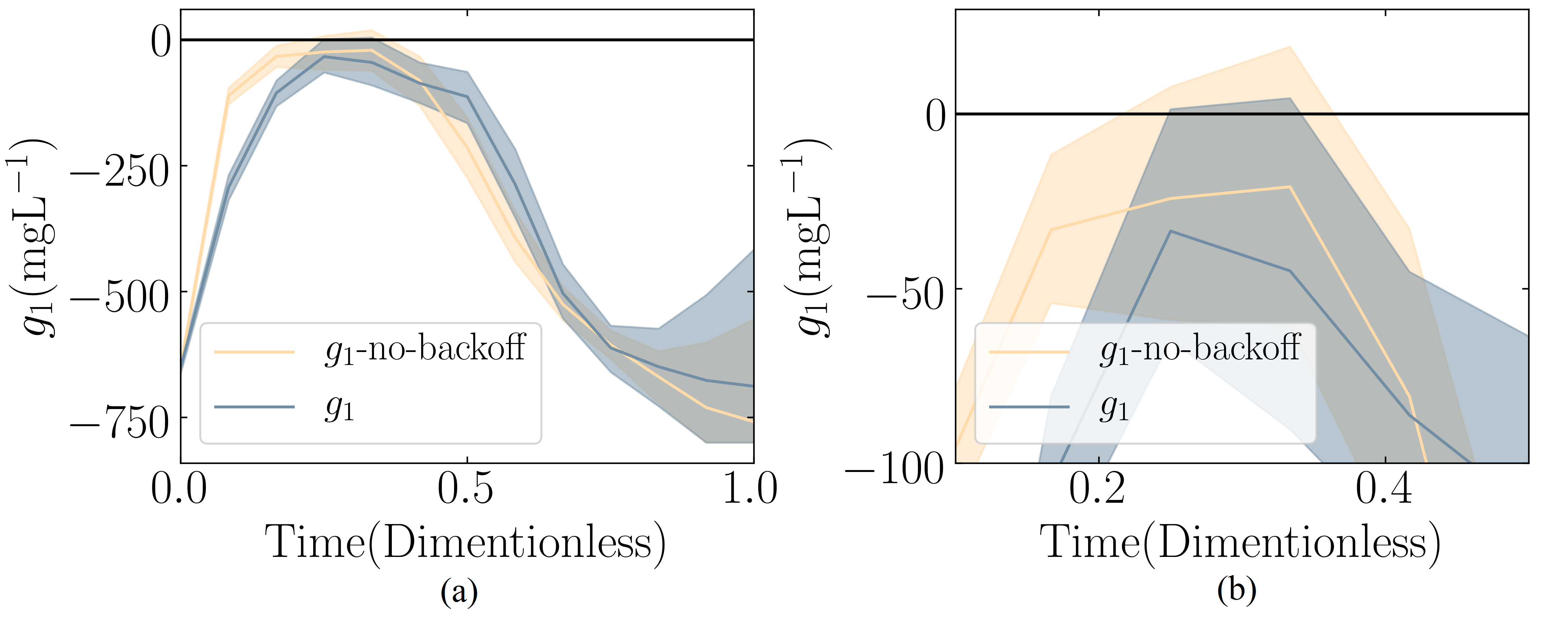}
    \caption{Case study 2: Constraints when backoffs are applied (blue) and when they are set to be zero (yellow) for $g_{1,t}$ (a), zoomed region where the violation of constraints occurs (b). The shaded areas are the 98\% and 2\% percentiles.}
    \label{fig:h12}
\end{figure}
\begin{figure}[H]
\includegraphics[width=\linewidth]{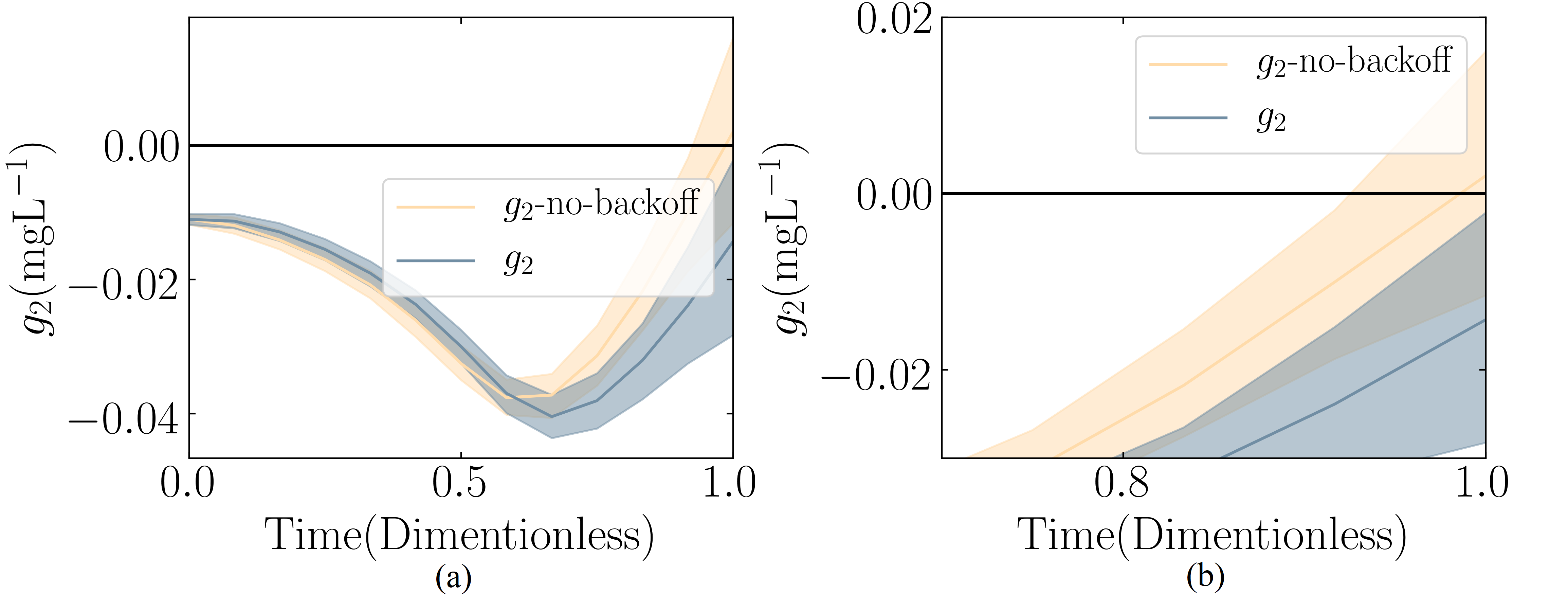}
    \caption{Case study 2: Constraints when backoffs are applied (blue) and when they are set to zero (yellow) for $g_{2,t}$ (a), zoomed region where the violation of constraints occurs (b). The shaded areas are the 98\% and 2\% percentiles.}
    \label{fig:h22}
\end{figure}
The backoff values for each update are shown in Fig. \ref{fig:backoff2} (a, b), where the red-dashed represents the converged final value. It should be noted that the final value for the product $c_q$~(\ref{reward_bio}) is 0.153 and 0.171 when backoffs are applied and when they are not. 
\begin{figure}[H]
\centering
\includegraphics[width=\linewidth]{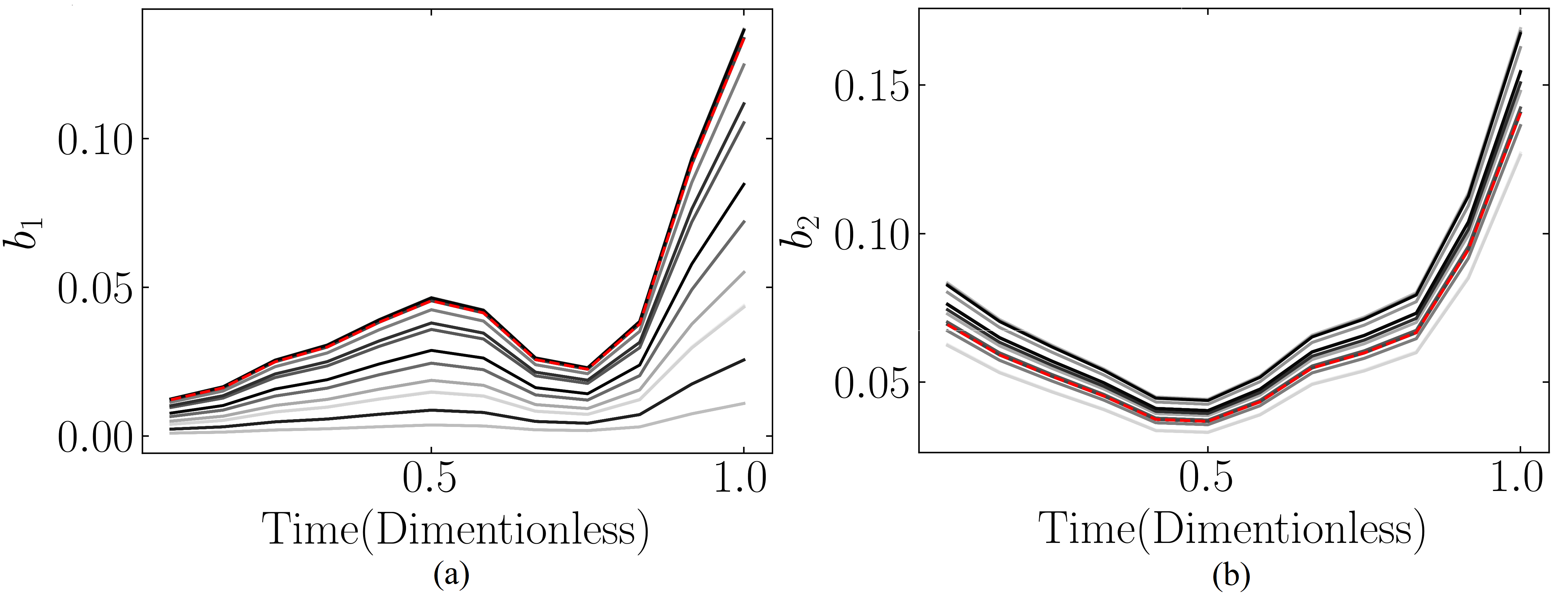}
    \caption{Case study 2: The lines are plotted over number of epochs for $b_{1,t}$ (a) and ,$b_{2,t}$ (b) which are faded out towards earlier iterations.}
    \label{fig:backoff2}
\end{figure}

\begin{table}[h]
    \centering
    \begin{tabular}{lll}
          \hline
      Case Study & Desired & Actual \\
      \hline
      Case Study 1: Parametric Nominal   & 0.99 & 0.51$^*$\\
      Case Study 1: Parametric Proposed   & 0.99 & 1.00\\
     Case Study 2: GP Nominal   & 0.95 & 0.24$^*$\\
     Case Study 2: GP Proposed   & 0.95 & 0.97\\
            \hline
    \end{tabular}
    \caption{Comparison of closed-loop constraint satisfaction. The symbol $^*$ corresponds to the cases that the constraint satisfaction is less than the desired.}
    \label{tab:clsatisfaction}
\end{table}

The algorithm is implemented in Pytorch \cite{NEURIPS2019_9015} version 0.4.1.  Adam~\cite{Kingma2014} is employed to compute the network's parameter values using a step size of $10^{-2}$ with the rest of hyperparameters at their default values. 

\section{Conclusions}
In this paper we address the problem of finding a policy that can satisfy constraints with high probability. The proposed algorithm - chance constrained policy optimization (CCPO) - uses constraint tightening by applying backoffs to the original feasible set. Backoffs restrict the perceived feasible space by the controller, and allow guarantees on the satisfaction of chance constraints. We find the smallest backoffs (least conservative) that still guarantee the desired probability of satisfaction by stating the root-finding problem as a black-box optimization problem. This allows the algorithm to construct a policy that can guarantee the satisfaction  of  joint  chance  constraints  with  a user-defined probability of at least $1-\alpha$ and a user-defined confidence of at least $1-\epsilon$. Furthermore, the proposed methodology can be combined with other penalty or Lagrangian approaches for constrained policy search.
Being able to solve constraint policy optimization problems with high probability satisfaction has been one of the main bottlenecks for the wider use of reinforcement learning in engineering applications. This work aims to take a step towards applying RL to the real world, where constraints on policies are necessary for the sake of safety and product quality.


\section*{Acknowledgment}
This project has received funding from the EPSRC projects (EP/R032807/1) and (EP/P016650/1).

\bibliographystyle{IEEEtran}
\bibliography{IEEEabrv, ifacconf}

\begin{thebibliography}{10}
\providecommand{\url}[1]{#1}
\csname url@samestyle\endcsname
\providecommand{\newblock}{\relax}
\providecommand{\bibinfo}[2]{#2}
\providecommand{\BIBentrySTDinterwordspacing}{\spaceskip=0pt\relax}
\providecommand{\BIBentryALTinterwordstretchfactor}{4}
\providecommand{\BIBentryALTinterwordspacing}{\spaceskip=\fontdimen2\font plus
\BIBentryALTinterwordstretchfactor\fontdimen3\font minus
  \fontdimen4\font\relax}
\providecommand{\BIBforeignlanguage}[2]{{%
\expandafter\ifx\csname l@#1\endcsname\relax
\typeout{** WARNING: IEEEtran.bst: No hyphenation pattern has been}%
\typeout{** loaded for the language `#1'. Using the pattern for}%
\typeout{** the default language instead.}%
\else
\language=\csname l@#1\endcsname
\fi
#2}}
\providecommand{\BIBdecl}{\relax}
\BIBdecl

\bibitem{Bertsekas:2000:DPO:517430}
D.~P. Bertsekas, \emph{Dynamic Programming and Optimal Control}, 2nd~ed.\hskip
  1em plus 0.5em minus 0.4em\relax Athena Scientific, 2000.

\bibitem{nathan2019}
S.~Spielberg, A.~Tulsyan, N.~P. Lawrence, P.~D. Loewen, and
  R.~Bhushan~Gopaluni, ``Toward self-driving processes: A deep reinforcement
  learning approach to control,'' \emph{AIChE Journal}, vol.~65, no.~10, p.
  e16689, 2019.

\bibitem{Lee2005}
J.~M. Lee and J.~H. Lee, ``{Approximate dynamic programming-based approaches
  for input–output data-driven control of nonlinear processes},''
  \emph{Automatica}, vol.~41, no.~7, pp. 1281--1288, 2005.

\bibitem{Peroni2005}
C.~Peroni, N.~Kaisare, and J.~Lee, ``{Optimal control of a fed-batch bioreactor
  using simulation-based approximate dynamic programming},'' \emph{IEEE
  Transactions on Control Systems Technology}, vol.~13, no.~5, pp. 786--790,
  2005.

\bibitem{Lee2006}
J.~H. Lee and J.~M. Lee, ``{Approximate dynamic programming based approach to
  process control and scheduling},'' \emph{Computers {\&} Chemical
  Engineering}, vol.~30, no. 10-12, pp. 1603--1618, 2006.

\bibitem{Tang2018}
W.~Tang and P.~Daoutidis, ``{Distributed adaptive dynamic programming for
  data-driven optimal control},'' \emph{Systems {\&} Control Letters}, vol.
  120, pp. 36--43, 2018.

\bibitem{Chaffart2018}
D.~Chaffart and L.~A. Ricardez-Sandoval, ``{Optimization and control of a thin
  film growth process: A hybrid first principles/artificial neural network
  based multiscale modelling approach},'' \emph{Computers {\&} Chemical
  Engineering}, vol. 119, pp. 465--479, 2018.

\bibitem{Shah2016}
H.~Shah and M.~Gopal, ``{Model-Free Predictive Control of Nonlinear Processes
  Based on Reinforcement Learning},'' \emph{IFAC-PapersOnLine}, vol.~49, no.~1,
  pp. 89--94, 2016.

\bibitem{Bemporad1999}
A.~Bemporad and M.~Morari, ``{Robust model predictive control: A survey},''
  \emph{Robustness in identification and control}, vol. 245, pp. 207--226,
  1999.

\bibitem{CharitopoulosEMPC}
\BIBentryALTinterwordspacing
V.~M. Charitopoulos and V.~Dua, ``Explicit model predictive control of hybrid
  systems and multiparametric mixed integer polynomial programming,''
  \emph{AIChE Journal}, vol.~62, no.~9, pp. 3441--3460, 2016. [Online].
  Available:
  \url{https://aiche.onlinelibrary.wiley.com/doi/abs/10.1002/aic.15396}
\BIBentrySTDinterwordspacing

\bibitem{Sutton:1999:PGM:3009657.3009806}
R.~S. Sutton, D.~McAllester, S.~Singh, and Y.~Mansour, ``Policy gradient
  methods for reinforcement learning with function approximation,'' in
  \emph{Proceedings of the 12th International Conference on Neural Information
  Processing Systems}, ser. NIPS'99.\hskip 1em plus 0.5em minus 0.4em\relax
  Cambridge, MA, USA: MIT Press, 1999, pp. 1057--1063.

\bibitem{Wen2018}
M.~Wen, ``{Constrained Cross-Entropy Method for Safe Reinforcement Learning},''
  \emph{Neural Information Processing Systems (NIPS)}, no. Nips, 2018.

\bibitem{Achiam2017}
J.~Achiam, D.~Held, A.~Tamar, and P.~Abbeel, ``{Constrained Policy
  Optimization},'' \emph{arXiv preprint 1705.10528}, 2017.

\bibitem{Tessler2018}
C.~Tessler, D.~J. Mankowitz, and S.~Mannor, ``{Reward Constrained Policy
  Optimization},'' \emph{arXiv preprint 1805.11074}, no. 2016, pp. 1--15, 2018.

\bibitem{Chow2019}
\BIBentryALTinterwordspacing
Y.~Chow, O.~Nachum, A.~Faust, E.~Duenez-Guzman, and M.~Ghavamzadeh,
  ``{Lyapunov-based Safe Policy Optimization for Continuous Control},'' 2019.
  [Online]. Available: \url{http://arxiv.org/abs/1901.10031}
\BIBentrySTDinterwordspacing

\bibitem{Yang2020Projection-Based}
\BIBentryALTinterwordspacing
T.-Y. Yang, J.~Rosca, K.~Narasimhan, and P.~J. Ramadge, ``Projection-based
  constrained policy optimization,'' in \emph{International Conference on
  Learning Representations}, 2020. [Online]. Available:
  \url{https://openreview.net/forum?id=rke3TJrtPS}
\BIBentrySTDinterwordspacing

\bibitem{Liu2019}
\BIBentryALTinterwordspacing
Y.~Liu, J.~Ding, and X.~Liu, ``{IPO: Interior-point Policy Optimization under
  Constraints},'' 2019. [Online]. Available:
  \url{http://arxiv.org/abs/1910.09615}
\BIBentrySTDinterwordspacing

\bibitem{PetsagkourakisTAC}
\BIBentryALTinterwordspacing
P.~{Petsagkourakis}, W.~P. {Heath}, J.~{Carrasco}, and C.~{Theodoropoulos},
  ``Robust stability of barrier-based model predictive control,'' \emph{IEEE
  Transactions on Automatic Control}, 2020. [Online]. Available: \url{doi:
  10.1109/TAC.2020.3010770.}
\BIBentrySTDinterwordspacing

\bibitem{Engstrom2020Implementation}
\BIBentryALTinterwordspacing
L.~Engstrom, A.~Ilyas, S.~Santurkar, D.~Tsipras, F.~Janoos, L.~Rudolph, and
  A.~Madry, ``Implementation matters in deep rl: A case study on ppo and
  trpo,'' in \emph{International Conference on Learning Representations}, 2020.
  [Online]. Available: \url{https://openreview.net/forum?id=r1etN1rtPB}
\BIBentrySTDinterwordspacing

\bibitem{schulman2017proximal}
J.~Schulman, F.~Wolski, P.~Dhariwal, A.~Radford, and O.~Klimov, ``Proximal
  policy optimization algorithms,'' \emph{arXiv preprint arXiv:1707.06347},
  2017.

\bibitem{Petsagkourakis2019a}
P.~Petsagkourakis, I.~Sandoval, E.~Bradford, D.~Zhang, and E.~del Rio-Chanona,
  ``Reinforcement learning for batch bioprocess optimization,'' \emph{Computers
  \& Chemical Engineering}, vol. 133, p. 106649, 2020.

\bibitem{Bradford2019}
E.~Bradford, L.~Imsland, D.~Zhang, and E.~A. del Rio~Chanona, ``Stochastic
  data-driven model predictive control using gaussian processes,''
  \emph{Computers \& Chemical Engineering}, vol. 139, p. 106844, 2020.

\bibitem{Rafiei2018}
\BIBentryALTinterwordspacing
M.~Rafiei and L.~A. Ricardez-Sandoval, ``{Stochastic Back-Off Approach for
  Integration of Design and Control Under Uncertainty},'' \emph{Industrial {\&}
  Engineering Chemistry Research}, vol.~57, no.~12, pp. 4351--4365, mar 2018.
  [Online]. Available: \url{https://doi.org/10.1021/acs.iecr.7b03935}
\BIBentrySTDinterwordspacing

\bibitem{Deisenroth2015}
M.~P. Deisenroth, D.~Fox, and C.~E. Rasmussen, ``{Gaussian Processes for
  Data-Efficient Learning in Robotics and Control},'' \emph{IEEE Transactions
  on pattern analysis and machine intelligence}, vol.~37, no.~2, 2015.

\bibitem{Chow2018}
Y.~Chow, M.~Ghavamzadeh, L.~Janson, and M.~Pavone, ``{Risk-constrained
  reinforcement learning with percentile risk criteria},'' \emph{Journal of
  Machine Learning Research}, vol.~18, pp. 1--51, 2018.

\bibitem{Zhan2020}
\BIBentryALTinterwordspacing
D.~Zhan and H.~Xing, ``{Expected improvement for expensive optimization: a
  review},'' \emph{Journal of Global Optimization}, 2020. [Online]. Available:
  \url{https://doi.org/10.1007/s10898-020-00923-x}
\BIBentrySTDinterwordspacing

\bibitem{Jones1998}
D.~R. Jones, M.~Schonlau, and W.~J. Welch, ``{Efficient Global Optimization of
  Expensive Black-Box Functions},'' \emph{Journal of Global Optimization},
  vol.~13, no.~4, pp. 455--492, 1998.

\bibitem{Cartis2018}
C.~Cartis, L.~Roberts, and O.~Sheridan-Methven, ``Escaping local minima with
  derivative-free methods: a numerical investigation,'' 2018.

\bibitem{Rumelhart1986}
D.~E. Rumelhart, G.~E. Hinton, and R.~J. Williams, ``{Learning representations
  by back-propagating errors},'' \emph{Nature}, vol. 323, p. 533, 1986.

\bibitem{RNN2017}
K.~{Greff}, R.~K. {Srivastava}, J.~{Koutník}, B.~R. {Steunebrink}, and
  J.~{Schmidhuber}, ``Lstm: A search space odyssey,'' \emph{IEEE Transactions
  on Neural Networks and Learning Systems}, vol.~28, no.~10, pp. 2222--2232,
  Oct 2017.

\bibitem{Mnih2013}
V.~Mnih, K.~Kavukcuoglu, D.~Silver, A.~Graves, I.~Antonoglou, D.~Wierstra, and
  M.~Riedmiller, ``{Playing Atari with Deep Reinforcement Learning},''
  \emph{arXiv preprint 1312.5602}, pp. 1--9, 2013.

\bibitem{Mnih2015}
V.~Mnih, K.~Kavukcuoglu, D.~Silver, A.~A. Rusu, J.~Veness, M.~G. Bellemare,
  A.~Graves, M.~Riedmiller, A.~K. Fidjeland, G.~Ostrovski, S.~Petersen,
  C.~Beattie, A.~Sadik, I.~Antonoglou, H.~King, D.~Kumaran, D.~Wierstra,
  S.~Legg, and D.~Hassabis, ``{Human-level control through deep reinforcement
  learning},'' \emph{Nature}, vol. 518, p. 529, 2015.

\bibitem{10.2307/2331986}
C.~J. Clopper and E.~S. Pearson, ``The use of confidence or fiducial limits
  illustrated in the case of the binomial,'' \emph{Biometrika}, vol.~26, no.~4,
  pp. 404--413, 1934.

\bibitem{brown2001interval}
L.~D. Brown, T.~T. Cai, and A.~DasGupta, ``Interval estimation for a binomial
  proportion,'' \emph{Statistical science}, pp. 101--117, 2001.

\bibitem{Paulson2018}
J.~A. Paulson and A.~Mesbah, ``{Nonlinear Model Predictive Control with
  Explicit Backoffs for Stochastic Systems under Arbitrary Uncertainty},''
  \emph{IFAC-PapersOnLine}, vol.~51, no.~20, pp. 523--534, 2018.

\bibitem{williams1992simple}
R.~J. Williams, ``Simple statistical gradient-following algorithms for
  connectionist reinforcement learning,'' \emph{Machine learning}, vol.~8, no.
  3-4, pp. 229--256, 1992.

\bibitem{Kakade2002}
S.~Kakade, ``{A natural policy gradient},'' \emph{Advances in Neural
  Information Processing Systems}, 2002.

\bibitem{Nocedal2006}
J.~Nocedal and S.~J. Wright, \emph{Numerical Optimization}, 2nd~ed.\hskip 1em
  plus 0.5em minus 0.4em\relax New York, NY, USA: Springer, 2006.

\bibitem{PrimalDualWright}
\BIBentryALTinterwordspacing
S.~J. Wright, \emph{Primal-Dual Interior-Point Methods}.\hskip 1em plus 0.5em
  minus 0.4em\relax Society for Industrial and Applied Mathematics, 1997.
  [Online]. Available:
  \url{https://epubs.siam.org/doi/abs/10.1137/1.9781611971453}
\BIBentrySTDinterwordspacing

\bibitem{Campi2018}
M.~C. Campi, S.~Garatti, and F.~A. Ramponi, ``{A General Scenario Theory for
  Nonconvex Optimization and Decision Making},'' \emph{IEEE Transactions on
  Automatic Control}, vol.~63, no.~12, pp. 4067--4078, 2018.

\bibitem{Sutton}
R.~Sutton and A.~Barto, \emph{{Reinforcement Learning: An Introduction Second
  Edition}}.\hskip 1em plus 0.5em minus 0.4em\relax MIT Press, 2018.

\bibitem{Frazier2018}
P.~I. Frazier, ``A tutorial on bayesian optimization,'' \emph{arXiv preprint
  arXiv:1807.02811}, 2018.

\bibitem{petsagkourakis2020constrained}
P.~Petsagkourakis, I.~O. Sandoval, E.~Bradford, D.~Zhang, and E.~A. del
  Río~Chanona, ``Constrained reinforcement learning for dynamic optimization
  under uncertainty,'' 2020.

\bibitem{Paulson_DNN}
J.~A. {Paulson} and A.~{Mesbah}, ``Approximate closed-loop robust model
  predictive control with guaranteed stability and constraint satisfaction,''
  \emph{IEEE Control Systems Letters}, vol.~4, no.~3, pp. 719--724, 2020.

\bibitem{LUCIA_dnn}
B.~{Karg} and S.~{Lucia}, ``Efficient representation and approximation of model
  predictive control laws via deep learning,'' \emph{IEEE Transactions on
  Cybernetics}, pp. 1--13, 2020.

\bibitem{Rasmussen2006}
C.~Rasmussen and C.~Williams, \emph{Gaussian Processes for Machine Learning},
  ser. Adaptive Computation and Machine Learning.\hskip 1em plus 0.5em minus
  0.4em\relax Cambridge, MA, USA: MIT Press, 2006.

\bibitem{Chua}
K.~Chua, R.~Calandra, R.~Mcallister, and S.~Levine, ``{Deep Reinforcement
  Learning in a Handful of Trials using Probabilistic Dynamics Models}.''

\bibitem{Janner}
M.~Janner, J.~Fu, M.~Zhang, and S.~Levine, ``{When to Trust Your Model:
  Model-Based Policy Optimization}.''

\bibitem{Umlauft2018}
J.~Umlauft, T.~Beckers, and S.~Hirche, ``{Scenario-based Optimal Control for
  Gaussian Process State Space Models},'' \emph{2018 European Control
  Conference, ECC 2018}, pp. 1386--1392, 2018.

\bibitem{Hewing2019}
\BIBentryALTinterwordspacing
L.~Hewing, E.~Arcari, L.~P. Fr{\"{o}}hlich, and M.~N. Zeilinger, ``{On
  Simulation and Trajectory Prediction with Gaussian Process Dynamics},'' 2019.
  [Online]. Available: \url{http://arxiv.org/abs/1912.10900}
\BIBentrySTDinterwordspacing

\bibitem{SOBOL2001271}
I.~Sobol, ``Global sensitivity indices for nonlinear mathematical models and
  their monte carlo estimates,'' \emph{Mathematics and Computers in
  Simulation}, vol.~55, no.~1, pp. 271 -- 280, 2001, the Second IMACS Seminar
  on Monte Carlo Methods.

\bibitem{NEURIPS2019_9015}
A.~Paszke, S.~Gross, F.~Massa, A.~Lerer, J.~Bradbury, G.~Chanan, T.~Killeen,
  Z.~Lin, N.~Gimelshein, L.~Antiga, A.~Desmaison, A.~Kopf, E.~Yang, Z.~DeVito,
  M.~Raison, A.~Tejani, S.~Chilamkurthy, B.~Steiner, L.~Fang, J.~Bai, and
  S.~Chintala, ``Pytorch: An imperative style, high-performance deep learning
  library,'' in \emph{Advances in Neural Information Processing Systems 32},
  H.~Wallach, H.~Larochelle, A.~Beygelzimer, F.~d.~Alch\'{e}-Buc, E.~Fox, and
  R.~Garnett, Eds., 2019, pp. 8024--8035.

\bibitem{Kingma2014}
D.~P. Kingma and J.~Ba, ``{Adam: A Method for Stochastic Optimization},'' 2014.

\bibitem{Shahriari}
B.~{Shahriari}, K.~{Swersky}, Z.~{Wang}, R.~P. {Adams}, and N.~{de Freitas},
  ``Taking the human out of the loop: A review of bayesian optimization,''
  \emph{Proceedings of the IEEE}, vol. 104, no.~1, pp. 148--175, 2016.

\bibitem{hypercube}
R.~L. Iman, J.~C. Helton, and J.~E. Campbell, ``An approach to sensitivity
  analysis of computer models: Part i—introduction, input variable selection
  and preliminary variable assessment,'' \emph{Journal of Quality Technology},
  vol.~13, no.~3, pp. 174--183, 1981.

\end{thebibliography}

\clearpage

\appendices
\section{Bayesian optimization}\label{BO}

Bayesian optimization is a popular approach for black-box optimization \citep{Jones1998}. A review of Bayesian optimization can be found in \cite{Shahriari}. 

In this paper Bayesian optimization is exploited to obtain the minimizer $\pmb{\gamma}^*$ of $\mathcal{F}(\pmb{\gamma}) = \left(\hat{F}_{lb} - (1-\alpha)\right)^2$:
\begin{equation}
\pmb{\gamma}^* \in \argmin_{\pmb{\gamma}} \mathcal{F}(\pmb{\gamma})
\end{equation}

The function $\mathcal{F}(\pmb{\gamma})$ to be minimized can only be observed through an unbiased noisy observation based on MC estimates of $\hat{F}_{lb}$:
\begin{equation}
y = \mathcal{F}(\mathbf{x}) + \omega
\end{equation}
where $\omega$ is assumed to be Gaussian white noise. The noise $\omega$ is assumed to be unknown and estimated within the GP framework.   
Commonly Gaussian processes (GP) are employed as nonparametric models. To start the algorithm we first require some data to build the initial GP, i.e. the function $\mathcal{F}(\cdot)$ is queried at $N_{\Gamma}$ points. In general the input data-points $\pmb{\Gamma}=[\pmb{\gamma}_1,\dots,\pmb{\gamma}_{N_{\Gamma}}]$ are selected based on a space-filling design, such as a Latin hybercube design \citep{hypercube}. From this we obtain the corresponding responses $\hat{\mathcal{F}}_{lb}=[\mathcal{F}(\pmb{\gamma}^1), \ldots,\mathcal{F}(\pmb{\gamma}^{N_{\Gamma}})]$. A GP model can then be trained from the input-output data. In particular the hyperparameters of the GP were trained using maximum likelihood estimation. The GP model can then be utilized to obtain the Gaussian distribution of $\mathcal{F}(\pmb{\gamma})$ at an arbitrary query point $\pmb{\gamma}$ \citep{Rasmussen2006}:
\begin{equation}
\mathcal{F}(\pmb{\gamma})|\pmb{\Gamma},\hat{\mathcal{F}}_{lb} \sim \mathcal{N}\left(\mu_{GP}(\pmb{\gamma}),\sigma^2_{GP}(\pmb{\gamma})\right) 
\end{equation}
where $\mu_{GP}(\pmb{\gamma})$ and $\sigma^2_{GP}(\pmb{\gamma})$ are the mean and variance prediction of the GP respectively. 
In this setting the approach sequentially selects a location $\pmb{\gamma}$ at which to query $\mathcal{F}(\cdot)$ and observe $y$. After a selected number of iterations $M$ the algorithm returns a best-estimate of $\pmb{\gamma}^*$. To accomplish this the GP of $\mathcal{F}(\cdot)$ is iteratively updated from the available data of $\mathcal{F}(\cdot)$. One could simply sample at the minimum of the mean function $\mu_{GP}(\pmb{\gamma})$. However, sampling  at a point with higher uncertainty could yield a lower minimum, i.e. there is a trade-off between sampling at points with low values of the mean function $\mu_{GP}(\pmb{\gamma})$ and high values of the variance function $\sigma^2_{GP}(\pmb{\gamma})$. The selection of the query points is given by so-called acquisition function, for which we used the lower confidence bound:
\begin{equation}
\pmb{\gamma}_m = \argmin_{\pmb{\gamma}} \mu_{GP}(\pmb{\gamma}) - 3 \sigma_{GP}(\pmb{\gamma})
\end{equation}
where $\pmb{\gamma}_m$ denotes the query point at iteration $m$. 
Note that the query point at each iteration are chosen at points that are predicted to be low by the mean function, but could also potentially yield lower values according to the variance function.

\newpage
\section{Nomenclature}

\begin{table}[htbp]
  \centering
  \caption{Nomenclature}
    \begin{tabular}{ll}
        \hline
    \multicolumn{1}{l}{Symbol} & \multicolumn{1}{l}{Description} \\
        \hline
    $N_x$  & Size of variable $\textbf{x}_t$ \\
    $\textbf{x}_t$  & state  at time $t$ \\
    $\textbf{u}_t$  & manipulated variables at time $t$ \\
    $\textbf{w}_t$  & External Disturbances at time $t$ \\
    $p(a|b)$  & probablity of event $a$ given $b$ \\
     $\mathcal{N}(\cdot,\cdot)$ & Normal distribution defined by mean and covariance \\
    $\sim$ & distributed according to; example: $x \sim \mathcal{N}(\mu, \sigma^2) $\\
    $\mathcal{D}$     & Data available \\
    $\mathbb{E}$     & Expectation \\
    $\mathbb{P}$      & Probability \\
    $\mathbb{X}$      & Feasible space for states variables \\
    $\mathbb{U}$      & Feasible space for manipulated variables\\
    $\cap$ & Intersection of sets \\
    $\textgamma$ & discount factor \\
    $R_t$ & Reward at time $t$ \\
    $g_j,t$ & Constraint $j$ at time $t$ \\
    $\pi_{\theta}(\cdot)$ & Policy parametrized by $\theta$ \\
    $b_j,t$ & Backoff of constraint $j$ at time $t$ \\
    $\pmb{\tau}$ & Joint random variable of states, controls and reward \\
    $1-\alpha$  & The probability of constraint satisfaction \\
    $1-epsilon$ & Confidence level \\
    $F(c)$    & Cumulative distribution function (CDF)  at $c$ \\
    $\mathds{1}(\cdot)$     & Indicator function \\
    Bin   & Binomal distribution \\
    $F_S$  & Cumulative distribution function (ECDF) for $F$ with $S$ samples \\
    $\hat{F}_S$ & Realization of $F_S$  \\
    betainv & Inverse comulative distribution of Beta probability distribution \\
    $F_{lb}$ & Lower bound of ECDF $F_S$ \\
    \textbackslash{}hat{F}\_{lb} & Realization of F\_lb \\
    pmb{\textbackslash{}gamma } & Parametrization of backoffs \\
    $||\cdot||_p$ & $l_p$ norm \\
    $\nabla_{\theta}$ & Parial derivatives with respect to $\theta$ \\
        \hline
    \end{tabular}%

  \label{tab:addlabel}%
\end{table}%

\end{document}